\pgfplotsset{width=10cm,compat=1.9}
\newcommand\numberthis{\addtocounter{equation}{1}\tag{\theequation}}
\title{Modeling reputation-based behavioral biases in school choice}
\author{
    Jon Kleinberg\\%\thanks{??? Supported in part by a Vannevar Bush Faculty Fellowship, MURI grant W911NF-19-0217, AFOSR grant FA9550-19-1-0183, a Simons Collaboration grant, and a grant from the MacArthur Foundation.}\\
    Cornell University\\
    \texttt{kleinberg@cornell.edu}
    \and
    Sigal Oren\\%\thanks{Supported in part by BSF grant 2018206 and ISF grant 2167/19.}\\
    Ben-Gurion University\\
    \texttt{sigal3@gmail.com}
    \and
    Emily Ryu\\%\thanks{Supported in part by the National Science Foundation Graduate Research Fellowship Program under Grant No. DGE-2139899.}\\
    Cornell University\\
    \texttt{eryu@cs.cornell.edu}
    \and
    \'Eva Tardos\\%\thanks{??? Supported in part by NSF grants CCF-1408673, CCF-1563714 and AFOSR grant FA9550-19-1-0183.}\\
    Cornell University\\
    \texttt{eva.tardos@cornell.edu}
}
\begin{document}

\maketitle
\thispagestyle{empty}
\begin{abstract}
    A fundamental component in the growing theoretical literature on school choice is the problem a student faces in deciding which schools to apply to. Recent models have considered a setting with a set of schools of different selectiveness, and a student who is unsure of their strength as an applicant and can apply to at most $k$ schools. Such models assume that the student cares solely about maximizing the quality of the school that they will attend. However, experience suggests that students' decisions are additionally influenced by a set of crucial behavioral biases based on reputational effects: they experience a subjective reputational benefit when they are admitted to a selective school, whether or not they attend; and a subjective loss based on disappointment when they are rejected. Guided by these observations, and inspired by recent behavioral economics work on loss aversion relative to expectations, we propose a behavioral model by which a student chooses schools in a way that balances these subjective behavioral effects with the quality of the school they eventually attend. 

Our main results show that a student's choices change in interesting and dramatic ways in a model where these reputation-based behavioral biases are taken into account. In particular, where a rational applicant spreads their applications evenly across the spectrum of school selectiveness at optimality, a biased student applies very sparsely to highly selective schools, such that above a certain threshold they apply to only an absolute constant number of schools even as their budget of available applications grows to infinity. Consequently, a biased student underperforms a rational student even when the rational student is restricted to a sufficiently large upper bound on applications and the biased student can apply to arbitrarily many. Our analysis shows that the reputation-based model is rich enough to cover a range of different ways that biased students cope with fear of rejection through their application decisions, including not just targeting less selective schools, but also occasionally applying to schools that are too selective, compared to rational students.
\end{abstract}
\newpage
\setcounter{page}{1}
\newcommand{\xhdr}[1]{\paragraph{\bf {#1}.}}
\newcommand{\omt}[1]{}

\section{Introduction}

In theoretical frameworks for college admissions or job search, 
a fundamental component is the model by which candidates choose
where to apply.
There are a number of distinct considerations that need to go into such a model.
Focusing on the language of admissions, there is a range of colleges
with different levels of selectivity, and the student applying is
unsure of their exact strength as an applicant, so 
they need to diversify where they apply in order 
to optimize the best school they
get into with a limited number of applications.
But there is also a behavioral component, which empirical work has
shown to produce large effects in this type of decision-making with
uncertain accept/reject outcomes: 
the student wants to proactively avoid situations where they
anchor their expectations on a good school that then rejects them,
since this produces large disutility through the interplay
of anticipation and subsequent disappointment.
Although the terminology is not perfectly apt, we will think
of the first of these considerations --- optimizing the quality of
the school you actually attend --- as the {\em rational}
part of the student's utility (since it is just about 
the admissions outcome without considering how the
student experiences it) and the second of these as the 
{\em behavioral} part.

A feature of most college application or job application processes
is that the applications are sent out (at least approximately)
in a single batch, and later
the candidate learns the outcome of all the applications and chooses
among the places where they are accepted.
This is in contrast to processes based on centralized matching
using mechanisms like the Deferred Acceptance algorithm (such as
in high school admissions in some municipalities, or medical resident
matching), where
candidates submit ranked lists and a global authority performs the matching.

In this paper we will consider the single-batch applications process and study the effect of behavioral bias in this process. We will use a basic model studied by Ali and Shorrer 
\cite{as2023college}:
applicants have {\em strengths} $s \in [0,1]$ indicating
the quality of their application. Schools admit based on this application strength, which is known to all schools but not to the applicant. We can label schools by real numbers $x$ in $[0,1]$; 
higher numbers $x$ correspond to better, more selective schools in that
the school with label $x$ only
accepts students of strength at least $x$.
A central question in this model is the following \emph{application portfolio problem}:
if a student can only send out $k$ applications, and 
wants to maximize their expected utility, where should they apply?  Ali and
Shorrer \cite{as2023college} show that the optimal portfolio of $k$ applications can be computed efficiently via dynamic programming. % \socomment{Does it makes sense to explicitly say that in their model there is a set of schools? so differentiating this from the continuous model.  }\etcomment{I was think that this would be too low level detail at this point. Maybe better at related work. you would like to see more detail here?}\jkcomment{I agree; in all cases there's an input set of schools, and the distinction between finite sets and a continuum doesn't seem crucial for this paragraph.} \ercomment{also agree} \socomment{I think the distinction is already built in the text above when we say they solve it using a dynamic program. This confused me since at first I thought they are also using continuum. Also, in sentence two of the next paragraph we do say that we assume a continuum so in some sense it is a distinction that we are making here. If you don't think other people will be confused about this we can leave it. }

Throughout this paper we will use a structured version of this model that allows us to expose the basic features and
surprising phenomena caused by the behavioral effects we are modeling.  We assume that there is a continuum of schools labeled by the real numbers in $[0,1]$; the school with label $x$ (which, as before, only accepts students
of strength at least $x$) has a utility of $x$ to the student, 
and the student's uncertainty about their application is reflected in the fact that their strength is drawn uniformly at random from $[0,1]$.
So from the student's perspective, the probability that they will be accepted to school $x$ is $1 - x$, and these acceptance events are dependent, in that if they get into school $x$, they will also get into school $x' < x$. Ali and Shorrer provided a solution
to this application portfolio problem in this special case for a rational agent, whose
expected utility is simply the strongest school they are admitted to;
roughly speaking, 
at optimality the student should evenly space their applications on the unit
interval, applying to schools $1/(k+1), 2/(k+1), \ldots, k/(k+1)$
and thereby covering the full interval increasingly densely, 
and uniformly, as $k$ increases \cite{as2023college}.

\xhdr{Our model: Incorporating behavioral considerations}
In this paper, we ask how these rational strategies compare to a model
in which an applicant also experiences  behavioral effects. For this, we draw on a well-established
approach to modeling behavioral utilities in 
a two-phase process such as this, where an individual seeks out valuable
options in a first phase and learns their outcomes in a second phase.
This approach, known as {\em expectations-based reference-dependent
preferences (EBRD)} \cite{KR06, KR07, KR09},
has been used in behavioral models for the different
context of Deferred Acceptance processes
\cite{dreyfussEBRD}, a paper that helped motivate our questions here.

% Jon: Trying some edits to the behavioral motivation based on our call

In our setting, we build a model based on these principles as follows.
First, as in the rational case, a student will derive a utility of
$x$ if they ultimately are accepted to, and select, school $x$.
They will only experience this part of the utility from the school
they actually end up attending, since it is based on actually
consuming the opportunity they are offered, and they can only attend
one school.
But beyond this,
a student also experiences a behavioral component in their utility 
from each school they apply to:
the process of applying creates an anchor point based on the 
expected value of the school, and
the student then receives either positive or negative subjective
value from their outcome relative to this anchor point, because
they either overperformed or underperformed their expectation.
%\jkreplace{We will refer to a student who {exhibits behavioral bias in her choices} as a {\em biased student}.}
{As we will see next, these behavioral considerations
will generally lead to bias in the decision-making, and we refer to
a student who incorporates them as a {\em biased student}.} 
% \socomment{As we discussed with $\gamma=1$ the agents behave completely rationally. Should we defer the definition of bias agents till after we add the loss aversion component? }\etcomment{I don't think deferring the reference to biased, I am was Ok with the way the text was. Maybe even preferred that. But also suggested an alternative avoiding the issue Sigal is raising.}\jkcomment{I prefer referring to bias at this point, and tried a further edit of Eva's sentence, above.} \ercomment{i like this} \socomment{The new sentence sounds good to me}

Here is how the model of this behavioral effect works in more detail.
Let's consider a school with admission probability $p$ and 
utility $v$ from attendance.  
(In our case, schools have $v = x$ and $p = 1-x$ for some $x \in [0,1]$,
but it is useful to consider the definition with $p$ and $v$
defined arbitrarily.)
We build up the behavioral component of the utility in the following steps:
\begin{itemize}
\item 
The basic ``units'' for this part of the utility are scaled by
a coefficient $\tau$, which gives the strength of the behavioral effect.
For a school that confers consumption utility $v$ from attending, it also has
a {\em subjective value} of $\tau v$ when considering the behavioral impact
of being admitted or rejected.
\item 
The behavioral effect is realized in two stages.
When the applicant first applies to the school, 
they anchor on the expected value $p \tau v$ as a default reference point
for the subjective utility they aspire to receive from this application.
(This anchoring gives the general formalism its name in the literature:
it is a reference-dependent preference that is based on expectations.)
\item 
In a second stage, the applicant learns whether or not they are admitted.
If they are admitted, then the realized gain relative to this reference
point, which is $\tau v - p \tau v = (1-p) \tau v$. This is the subjective
benefit they experience from being accepted.
If they are rejected, then they experience a corresponding subjective loss
of $p \tau v$ relative to the value they anchored on.
\item
If this were the entire process, then an applicant would break even
in expected utility on all the schools they apply to --- anchoring on
the expected value $p \tau v$ in advance, 
and then either realizing the {additional part of the} full subjective value $\tau v$ if accepted, or 
paying back what they anchored on if they are rejected. % \etcomment{I think we should avoid the word "borrowing" as that more suggest that this $ p \tau v$ will be part of their value. } \socomment{I agree}
But a line of empirical work establishes that the ubiquitous behavioral
mechanism of loss aversion is at work in this process, and 
the loss in the case of rejection is magnified
\cite{artemov2017strategic,hassidim2021limits}.
Thus, we say that when the applicant is rejected, they experience a
disutility that is not $- p \tau v$ but instead $- \lambda p \tau v$
for a value $\lambda > 1$ --- an intuitively familiar experience in
which the disappointment from rejection is amplified relative to
the amounts involved.
\end{itemize}
So this gives the full form of the student's expected behavioral utility 
from applying to a given school with acceptance probability $p$ and value $v$:
relative to the reference point of $p \tau v$ they set at the time
they apply, they gain $(1-p) \tau v$ with probability $p$, and they
lose $\lambda p \tau v$ with probability $1 - p$, for a total of
$$p(1-p) \tau v - (1-p) \lambda p \tau v 
 = (1 - \lambda) \tau (1-p)pv = - \gamma (1-p)pv,$$
where in the last equality we have defined a new coefficient
$\gamma = (\lambda - 1) \tau > 0$
to write the expected behavioral utility more compactly.
Note that this expression is negative for $\lambda > 1$, and this
creates an aversion for applicants to apply based on the fear of rejection.
We will refer to $\gamma$ as the student's {\em bias parameter} and say that a student with $\gamma>0$ is a $\gamma$-biased student;
we expect students with small values of $\gamma$ to behave more similarly
to a rational student, since $\gamma = 0$ corresponds to the case 
in which the behavioral effects do not shift the utility calculations.
In our subsequent discussion of related work, we go into more detail
on how this formalism relates to other uses of the EBRD framework 
for problems in which agents anchor on expected gains with
the possibility of rejection.

We'll assume that the biased student still chooses to attend
the highest-quality school among the ones where they are admitted.
If we define the {\em admissions outcome} for a student as the quality of this
school that they ultimately attend, then for a biased student,
we have a gap between their expected admissions
outcome --- which is based entirely
on the quality of the school they attend --- and their expected utility,
which consists of the admissions outcome plus 
the behavioral terms as well.
In contrast, for our notion of a rational student, who does not experience
the behavioral effects, the admissions outcome and the utility are the same.

\xhdr{Our results: The effect of behavioral biases}
We therefore come to the basic question that motivates the paper:
how does a biased student solve the application
portfolio problem, and how does their expected admissions outcome ---
the quality of the school they eventually attend --- 
compare to the corresponding outcome for a rational student
as a function of their bias parameter\etdelete{ $\gamma$}?

We find that the biased student's solution to the application portfolio
problem is dramatically different from the rational student's,
for every positive bias parameter $\gamma$.
In contrast to the rational student's strategy of using their $k$
applications to apply uniformly across the range of all schools,
the biased applicant applies very sparsely to more selective schools,
and focuses their applications increasingly densely on less 
selective schools (see Section~\ref{sec:undershooting}). 
This effect becomes extreme in the limit as $k$ grows ---
even as $k$ goes to infinity (when the rational agent's applications
produce a dense cover of the unit interval), 
we show that the number of schools that the biased agent applies to above any
constant $c \in [0,1]$ remains bounded by an absolute constant
independent of $k$.
(As an example of this effect instantiated with concrete parameters, see Figure \ref{fig:portfolio_numberline}, which shows the portfolio of schools chosen by a student with bias parameter $\gamma = 0.1$ who sends out $k = 100$ applications.)
We prove other versions of this as well; for example, for every
$\gamma > 0$ there is a number $h(\gamma) < 1$ such that a 
student with bias $\gamma$
applies to at most one school above $h(\gamma)$ even as
$k$ goes to infinity.

\begin{figure}[H]
    \centering
    \includegraphics[width=0.8\linewidth]{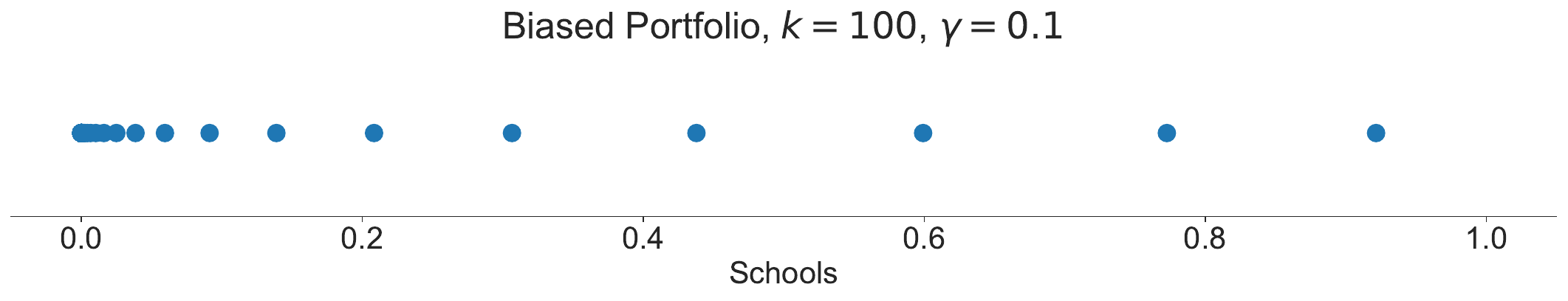}
    \caption{As shown here, a biased student in our model concentrates most of their applications on less selective schools, and applies very sparsely to the more selective end of the range.  The plot shows the positions (on the interval $[0,1]$) of $k = 100$ applications sent out by a student with bias parameter $\gamma = 0.1$.}
    \label{fig:portfolio_numberline}
\end{figure}

A consequence of this sparsity at the top end of the school selectivity
range is the following result: 
a biased student sending an unbounded number of applications will
have a lower expected admissions outcome than a rational student
who sends a sufficiently large constant number of applications.
More precisely (as we also show in Section ~\ref{sec:undershooting}),
for every $\gamma > 0$, there is a constant $k(\gamma)$ such that
a student with bias $\gamma$ and any number of applications has a lower
expected admissions outcome than a rational student 
who sends $k(\gamma)$ applications.
This is a very strong type of gap between rational and biased behavior;
it is not the case that in order to match the outcome of 
a rational student with $k$ applications, a biased student needs some
$f(k)$ applications, but instead that there are some values of $k$
where the biased student simply won't be able to
match the expected outcome --- given their strategy ---
no matter how many schools they are allowed to apply to.

Finally, we prove a further set of results establishing that
our model of behavioral biases is rich enough to cover
qualitatively different ways in which a biased student can cope
with the possibility of rejection (Section~\ref{sec:mon}).
In particular, if an applicant is concerned about rejection, there are two
ways of reducing the anticipated value they anchor on --- 
either by applying to a school that is less selective (thereby
reducing the realized benefit if admitted) or by applying to a school 
that is too selective (thereby reducing the expectation by making
admission more unlikely).
For different parameter ranges, our model produces both behaviors:
not only do biased students over-concentrate on less selective schools,
but for small values of $\gamma$ they also engage in a surprising
type of ``overshooting'' behavior, in which the top schools they apply
to are too selective, allowing them to reduce the expectation they
anchor on because the school's admission probability is so low.
This range of different behaviors illustrates the richness of
the model, and the ways in which it captures phenomena that are
familiar from our everyday experience of applying and experiencing rejection.
% \socomment{Do we want to add an outline for the paper?}\etcomment{I vote for "no". I think with catchy section titles, the outline is not adding anything useful. Maybe we can add section references to the list of our results. that takes less space and at least as useful in my experience}\jkcomment{I like the idea of adding parenthetical section references when we talk about the results.  I didn't do that yet, but could add them.} \ercomment{i took a stab at this, but realized that most of the results are just all in Section 5 (and we don't really have subsections). would it be helpful to add direct Theorem pointers instead / in addition?} \jkcomment{I think the section numbers are the right granularity; I made some edits to this.}

% Earlier text from Sigal, incorporated into the above:
\omt{
Often the utility of students that apply to school consists of two components: 1. A standard consumption utility for a school they were admitted to and decided to attend; 2. a \emph{behavioral} utility that corresponds to a reputation gain from being admitted to a school or a reputation loss from being rejected from a school. 
The utility from the first component equals the maximal value of a school that the student was admitted to. The second utility is exhibited for each one of the schools that the student applies to. The rationale here is that this reptutational term is exhibited when the students get to see her application results or tell her friends about them and in a sense is independent from the results of her other applications.

This view fits well the spirit of a model of expectations-based reference-dependent preferences (EBRD) by K\H{o}szegi and Rabin~\cite{KR06, KR07, KR09}. In the EBRD model, the agent separately derives two forms of utility: the standard ``consumption utility'' and an additional reputational cost termed 'new utility' relative to a \emph{reference point} determined by her expectations about her environment. An agent with EBRD preferences will take into account how her strategy affects her expected outcome, and thus her reference point and gain-loss utility. \socomment{This was copied from Emily's background}

In the classic ERBD model the new cost is related to the consumption. Per the previous discussion, in our model we explicitly distinguish between the two. Let $v$ be the consumption value of some school, then we say that a student that is admitted to the school receives an $\rho \cdot v$ reputational utility for being admitted to the school regardless of whether she attends the school or not. As in the ERBD, reputation utility is relative to a reference point which is defined for each school as the expected reputational gain: $p\cdot \tau \cdot v$. If the student is admitted to the school it gains an extra reputational utility of $(1-p) \tau v$  while if it is rejected it incurs a reputational cost of $p \cdot \tau v$. Note that a rational agent that has these behavioral term behaves exactly the same as an agent that does not have them since the reputaional costs and benefit cancel out and will not affect her expected utility ($p\cdot(1-p) \tau v - (1-p)p\tau v = 0$). However, as anyone that was ever rejected knows rejections are much more potent than success, as in the celebrated line of work on prospect theory \cite{} on loss aversion. Propsect theory tells us that people tend to put more emphasis on their losses than their gains. The same assumption is also applied by the ERBD model. Hence, in this work we consider biased students that have loss aversion. We choose model loss aversion linearly and assume that a biased agent with loss parameter $\lambda>1$ has the following expected reputation utility $p\cdot(1-p) \tau v - \lambda(1-p)p\tau v = -\gamma(1-p)p v$. It will turn out to be convenient to set $\gamma = (1-\lambda)\cdot \tau$ and study the problem in terms of $\gamma$ instead.

\socomment{I think we need another parameter for this to make sense. So assume the reputations cost is $\rho$ times the consumption value. We can get rid of this parameter at the same time we use $\gamma$ instead of $\alpha$ and $\beta$. Also instead of having $\alpha$ and $\beta$ I switched it to $1$ and $\lambda$ which seems somewhat simpler. }
}

\section{Background \& motivation} \label{sec:motivation}

% \todo{a couple sentences about matching markets and prefs and DA}\etcomment{Given how much we have in the intro now, I don't think we need this.} \ercomment{sounds good to me, i'll remove this if everyone else also agrees}

\subsection{Simultaneous applications and short lists}
The school choice literature of decentralized and/or simultaneous applications (such as the US college admissions system) has often been used the canonical framework of \emph{simultaneous search} due to Chade and Smith~\cite{chade2006simultaneous}, in which a decision-maker must choose a portfolio of gambles on stochastically independent lotteries. In the language of school choice, the student considers applying to a small subset of $k$ out of $n$ schools, where each school has an exogenously fixed acceptance probability, and admissions decisions are \emph{mutually independent}. In this setting, the optimal $k$-portfolio can be computed efficiently using a greedy algorithm or dynamic programming, and expands upwards (to include increasingly competitive schools) as $k$ increases.

Despite the elegance of the simultaneous search framework, one criticism it has received is that it does not predict the empirically observed strategy of safety schools (i.e. downwards expansion to include less competitive schools as a ``safety net''); this is due to the fact that admissions decisions are generally \emph{correlated} rather than independent. To address this, Ali and Shorrer propose to model correlation via a \emph{threshold model}~\cite{as2023college}. Here, the student has a single ``common score'' that is known to each school (but not to the student herself), there is a finite set of schools, each school has an exogenously fixed acceptance threshold, and students have varying utilities from attending each school. The student now attempts to choose a subset of $k$ schools to apply to, based on her distribution of beliefs over her score. Ali and Shorrer~\cite{as2023college} show that the optimal $k$-portfolio can be computed efficiently via dynamic programming, and demonstrates both upwards and downwards expansion (reflecting the notion of a reach/match/safety strategy). \cite{as2023college} also attempt to expand their findings to an admissions process with partially correlated decisions (i.e. interpolating between their model and that of~\cite{chade2006simultaneous}).
% \etdelete{, but the resulting model becomes quite challenging to analyze mathematically}. \ercomment{this last sentence is an attempt at foreshadowing our justification for also using a linear threshold model for tractability}\etcomment{quite challenging mean here means that  "general analysis of this setup is beyond our scope", that is, hey dont do it. If we want to keep this sentence, I propose something stronger. Like "resulting model becomes quite too complex to offer a general analysis". Maybe. Or do we really want this here?} \ercomment{this is section 5.2 and appendix D.3 of the ali/shorrer paper -- they basically only have 2-3 results, none of which are particularly interesting/insightful, but the math does get quite technically involved so i didn't want to be too dismissive (and thus wasn't sure how to word this). i guess we could just leave it out?} \socomment{I think it is best to leave this out.} \ercomment{seems fair}\etcomment{how is just deleting the end, as I am suggesting above. "attempt to expand" is maybe a fair description suggesting that they do something but not as good as the shared score case.}

Other works such as \cite{haeringer2009constrained} and \cite{gimbert2021constrained} have studied school choice with short lists and portfolio size constraints under various other mechanisms and sources of uncertainty, but their results primarily focus on the existence and structure of equilibria in multiple-student settings, rather than characterizing student strategies or outcomes.

This line of work has focused primarily on studying the behavior of rational students faced with application costs or size constraints; to the best of our knowledge, none has attempted to incorporate a model of behavioral bias or loss aversion into its analysis, as we now propose to do.

\subsection{Behavioral bias in matching markets}
Theoretical work on school choice has traditionally focused on modeling the behavior of rational students, but empirical research has actually found evidence of behavioral biases in how students apply to schools.
Despite the fact that Deferred Acceptance (DA) is strategyproof, numerous studies have repeatedly found empirical evidence of participants misreporting their true preferences, both in field and lab settings. These misreports include ``obvious flippings,'' such as ranking a smaller amount of money higher than a larger amount of money \cite{HKsurvey}, and ``obvious droppings,'' such as ranking an unfunded spot in a program but completely excluding a spot in the same program funded by a fellowship \cite{hassidim2021limits, shorrer2018obvious, artemov2017strategic}. To explain these seemingly obviously dominated misreports, one theory that has been proposed is \emph{behavioral bias}, in which agents appear to play suboptimally (with respect to the classical utility function) because they are actually optimizing for an alternate utility function which incorporates behavioral preferences and/or biases.

{Within this realm, one potential explanation that has received attention is the fear of rejection or disappointment. Dreyfuss et al.~\cite{dreyfussEBRD} proposed to model this by applying a simplified version of the \emph{expectations-based reference-dependent preferences} (EBRD) model of K\H{o}szegi and Rabin \cite{KR06, KR07, KR09} to school choice. In the EBRD model, the agent separately derives two forms of utility: the standard ``consumption utility'' and an additional ``gain-loss utility''(or ``news utility'') relative to a \emph{reference point} determined by her expectations about her environment. An agent with EBRD preferences will take into account how her strategy affects her expected outcome, and thus her reference point and gain-loss utility.}

% \sodelete{Dreyfuss et al.~\cite{dreyfussEBRD} proposed \soreplace{an adaptation~}{implementing a simplified version} of the EBRD model of  K\H{o}szegi and Rabin \cite{KR06, KR07, KR09} to school choice. \soedit{This implementation} 
% One key limitation of the way \cite{dreyfussEBRD} \soreplace{adapted~}{applied} the EBRD model to school choice is that it}
The model of Dreyfuss et al.~\cite{dreyfussEBRD} only considers the utility of a multiple-school application portfolio within the context of DA. While variants of DA are often the mechanism of choice in systems with centralized clearinghouses, many other real-world scenarios (e.g. college applications in the United States) use \emph{decentralized} systems in which participants choose a \emph{set} (rather than a rank-order list) of schools to apply to; this construction thus cannot be applied to understand the effect of loss aversion on students' behavior in such systems. Further, in the model used by \cite{dreyfussEBRD}, the student experiences negative gain-loss utility with respect to the reference point for \emph{every} school she does not attend; as a result, she must pay back her ``borrowed anticipation'' with a penalty of $\lambda > 1$ even for those schools she ranks \emph{lower} than the one she ends up attending.  We propose a new model of reputation-based behavioral bias in school choice, aiming to model fear of rejections (rather than just not attending a school). In our model, the student still evaluates her utility in terms of gain/loss with respect to an anticipated reference point for each school, but we model gaining positive utility from being admitted to schools, even when not attending, and only experiencing negative utility from schools that reject her.

\section{A model of students with reputation-based utility} \label{sec:modelsetup} 

In our model, students obtain both \emph{consumption utility} and \emph{subjective value} (potentially either a benefit or a loss) from schools. Suppose school $i$ is parameterized by its acceptance probability and value to the student: $(p_i, v_i)$. Let $\tau$ be
a parameter measuring the strength of the student's behavioral bias, so that a school that confers \emph{consumption utility} $v$ has a \emph{subjective value} of $\tau v$ that the student considers when weighing the possibility of admission or rejection. At the time the student applies to school $i$, she forms a reference point at the expected subjective value of that school, $p_i \tau v_i$, around which she will compare her realized outcome. If the student is accepted, she receives the additional utility (say, due to a boost in reputation or morale) of the school's full subjective value relative to her anticipated reference point; that is, she gains a subjective benefit of $\tau v_i - p_i \tau v_i = (1-p_i) \tau v_i$. On the other hand, if the student is rejected, she receives disutility from the loss of her anticipation; to model the well-established phenomenon of loss aversion, this disutility is magnified by a coefficient $\lambda > 1$ to give a subjective loss of $-\lambda p_i \tau v_i$.

%If the student is accepted, she receives the additional utility (say, due to a boost in reputation or morale) of the school's value relative to her anticipated reference point\etedit{. Suppose the student weights their reputational utility (relative to consumption utility) by $\alpha$}; that is, she gains \emph{reputational utility} of $\etedit{\alpha(}v_i - p_iv_i\etedit{)} = \etedit{\alpha}(1-p_i) v_i$. On the other hand, if the student is rejected, she receives disutility (due to the pain of rejection) equal to the loss of her anticipation\etedit{. We will use $\beta$ to denote how the student weights her disappointment utility relative to consumption, then }\etdelete{;} her \emph{disappointment utility} is $-\etedit{\beta}p_iv_i$. Finally, the student receives \emph{consumption utility} (in the classical consumption sense) from the single school that she ultimately chooses to attend. \ercomment{need to incorporate $\tau$ into this section too?}

In summary, for each school $i$, the student obtains 
\begin{enumerate}
    \item \emph{Subjective benefit}: $(1-p_i) \tau v_i$ if accepted to $i$.
    \item \emph{Subjective loss}: $-\lambda p_i \tau v_i$ if rejected from $i$.
    \item \emph{Consumption utility}: $v_i$ if the student attends $i$, $0$ otherwise (in other words, $v_i$ if and only if $i$ is the highest ranked school where the student is accepted).
\end{enumerate}
% Further, suppose the student weights their reputational and disappointment utilities (relative to consumption utility) by $\alpha$ and $\beta$, respectively, and define $\gamma \coloneqq \beta - \alpha$. 
For notational simplicity, we define a new coefficient $\gamma = (\lambda - 1) \tau$. Then, for an application portfolio $P$, a \emph{$\gamma$-biased student} has \emph{perceived expected utility} 
\begin{align*}
    U_\gamma(P) &\coloneqq \sum_{i\in P} \left[p_i (1-p_i) \tau v_i - (1-p_i) \lambda p_i \tau v_i + v_i \PP(\text{student attends } i) \right] \\
    &= \sum_{i\in P} (1-\lambda) \tau p_i (1-p_i) v_i + \sum_{i\in P} v_i \PP(\text{student attends } i) \\
    &= -\sum_{i\in P} \gamma p_i (1-p_i) v_i + \sum_{i\in P} v_i \PP(\text{student attends } i). 
\end{align*}

In the third line, we have separated the utility into a \emph{biased} term combining the subjective benefit and loss (analogous to the EBRD news utility) and an \emph{unbiased} consumption utility term.

Lastly, note that we are most interested in the case of $\gamma > 0$ (again analogous to the EBRD model, in that due to loss aversion, the effect of bad news is more significant than the effect of good news), but also attempt to prove statements in the fullest generality possible; each of our results will clearly distinguish the appropriate regime of $\gamma$ for which it holds.

% \begin{remark}
%     \ercomment{is this remark helpful?} \socomment{I'm not sure we should get into this. Also, do you mean that an application sets a reffence point? since this is the whole idea of EBRD.} \etcomment{maybe best to just delete this remark} A criticism of the above explanation might be that it involves a changing reference point. One justification might be that since the utility is separable, it makes sense for the student to also have a \emph{separate} reference point for each school. \etcomment{I would skip the proposed criticism and the part till now. Many start the next sentence with "A way to think about the propose expression is...} Another justification for these expressions could be simpler first principles -- being accepted to school $i$ gives a student more of a ``reputational boost'' if $p_i$ is small or $v_i$ is large (because the school is competitive and/or popular); conversely, rejection causes more disappointment/pain if $p_i$ is large (because the student really expected to get in) or $v_i$ is large (because the student really liked the school), and the expressions reflect this.
% \end{remark}

Throughout this paper we will use a simple model that allows us to expose the basic features and surprising phenomena caused by the bias of the loss aversion we are modeling. 

We will assume that schools lie on a continuum $S = [0,1]$, such that for every $x \in [0,1]$, there exists a school which has both value and acceptance threshold $x$. Further, we will suppose that the student's score is distributed uniformly on $[0,1]$ (so that she is accepted to school $x$ with probability $1-x$).

\section{Preliminaries and Basic Observations } \label{sec:prelim}
\subsection{Warm-up: Applying to a single school} \label{sec:singleschool}

% \etdelete{To illustrate some basic features of our model, first suppose that schools lie on a continuum $S = [0,1]$, such that for every $x_i \in S$, there exists a school which has both value and acceptance threshold $x_i$. Suppose further that the student's score is distributed uniformly on $[0,1]$ (so that she is accepted to school $x_i$ with probability $1-x_i$), and must choose a single school to apply to. 

% Then}
When the student can only apply to a single school, the expected payoff of a student with bias $\gamma$ who applies to school $x$ is 
\begin{equation*}
    U_\gamma(x) = -\gamma (1-x) x^2 + x(1-x) =x(1-x)(1 - \gamma x),
\end{equation*}
which is a cubic with roots at $0$, $1$, and $\frac{1}{\gamma}$. If $0 < \gamma \le 1$, then $\frac{1}{\gamma} \ge 1$, so the graph of the cubic resembles Figure~\ref{fig:singleschool_smallgamma}. Observe that every school $x\in [0,1]$ gives the student non-negative expected payoff, and there exists a unique optimal school $x^* \in (0,1)$.

On the other hand, if $\gamma > 1$, then $0 < \frac{1}{\gamma} < 1$ and the cubic resembles Figure~\ref{fig:singleschool_biggamma}. Now, in addition to the unique optimal school $x^* \in (0,1)$, there also exists a unique \emph{worst} school $\underline{x} \in (0,1)$. In particular, note that $\underline{x} < 1$ with strict inequality, so the worst school is \emph{not} necessarily the most competitive school (because the student assigns a low probability of acceptance to such a school, thus forming a low reference point and experiencing less disappointment upon rejection). Further, all schools with $x > \frac{1}{\gamma}$ give the student negative utility, so the student will only ever apply to schools with $x < \frac{1}{\gamma}$. That is, as $\gamma$ increases, the student places more importance on disappointment utility than reputational utility, and becomes increasingly unwilling to apply to competitive schools for fear of rejection.

In either case, additionally observe that $U_\gamma'(\frac12) = -\frac{\gamma}{4} < 0$, meaning that $x^* < \frac12$. That is, a student with any bias $\gamma > 0$ will shade downwards from the optimum of $\frac12$ (the maximizer of $U(x) = x(1-x)$, the unbiased/true expected payoff), indicating that the fear of rejection causes her to undershoot her true potential.

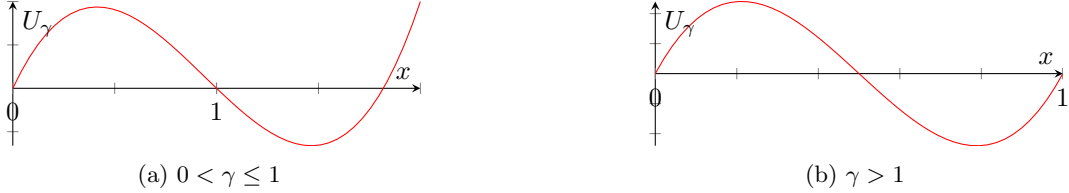
\begin{figure}[H] \label{fig:payoff_single}
   \begin{subfigure}[b]{.48\textwidth}
      \centering
      \begin{tikzpicture}
        \begin{axis}[
            width = 7cm,
            height = 3.5cm,
            axis lines = middle,
            xlabel = $x$,
            ylabel = $U_\gamma$,
            xticklabel = \empty,
            yticklabel = \empty,
            extra x ticks = {0, 1},
            extra x tick style={xticklabel=\pgfmathprintnumber{\tick}}
        ]
        \addplot [
            domain=0:2, 
            samples=50, 
            color=red,
        ]
        {5*x*(1-x)*(1-0.55*x)};
        \end{axis}
        \end{tikzpicture}
      \caption{$0<\gamma \le 1$}
      \label{fig:singleschool_smallgamma}
   \end{subfigure}
   \hfill
   \begin{subfigure}[b]{.48\textwidth}
      \centering
        \begin{tikzpicture}
        \begin{axis}[
            width = 7cm,
            height = 3.5cm,
            axis lines = middle,
            xlabel = $x$,
            ylabel = $U_\gamma$,
            xticklabel = \empty,
            yticklabel = \empty,
            extra x ticks = {0, 1},
            extra x tick style={xticklabel=\pgfmathprintnumber{\tick}}
        ]
        \addplot [
            domain=0:1, 
            samples=50, 
            color=red,
        ]
        {5*x*(1-x)*(1-2*x)};
        \end{axis}
        \end{tikzpicture}
      \caption{$\gamma > 1$}
      \label{fig:singleschool_biggamma}
   \end{subfigure}
   \caption{Expected (biased) utility $U_\gamma(x)$ for a student applying to a single school $x$.}
\end{figure}

%\subsection{General multi-school setting: Suboptimal $\gamma$-biased behavior} \label{sec:multschools}
\subsection{Multi-school setting and the optimal portfolio}
\label{sec:multschools}
% \etcomment{previous proposed title: "General multi-school setting - formulation and the unbiased optimal portfolio"} 
% \ercomment{title could probably use some work UPDATE -- yes this one seems better, thanks eva}

%\subsection{Model and the unbiased optimal portfolio}

Now, suppose that the student can simultaneously apply to a subset of schools of size $k$. %\sodelete{(following the lead of~\cite{as2023college}, we will continue the use of the linear threshold model to maintain tractability)} \socomment{I think it is enough to mention this in the intro and perhaps the one school setting.} 
For notational convenience denoting $x_0 = 1$, a student with bias $\gamma$ perceives a portfolio $x_1 > x_2 > \dots x_k$ to have expected value
\begin{align*} 
    U_\gamma(x_1, \dots, x_k) = \sum_{i=1}^{k} \left[ -\gamma (1 - x_i)x_i^2 + x_i (x_{i-1} - x_i) \right], 
\end{align*} 
which follows from the fact that the student is accepted to the school of value $x_i$ if her score is in $[x_i, 1]$ (which occurs with probability $1-x_i$), and attends this school if her score is in $[x_i, x_{i-1})$ (which occurs with probability $x_{i-1}-x_i$).   

\begin{proposition} \label{prop:spacing}
For a portfolio to maximize this perceived expected value, the first order optimality conditions are
\begin{equation} 
    x_{i+1} = 2x_i - x_{i-1} + \gamma(2x_i - 3x_i^2) \label{eq-der-0}
\end{equation}
For an unbiased student (with $\gamma=0$), the resulting
optimal set of $k$ schools in $[0,1]$ is $\frac{i}{k+1}$ for $i\in [k]$.
\end{proposition}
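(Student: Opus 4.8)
The plan is to treat this as maximizing the continuous function $U_\gamma$ over the compact region $\Delta = \{(x_1,\dots,x_k) : 1 = x_0 \ge x_1 \ge \cdots \ge x_k \ge 0\}$, write down the stationarity conditions at an interior optimum, and then specialize to $\gamma = 0$. The key structural fact for the first part is that each coordinate $x_i$ enters the sum defining $U_\gamma$ in exactly two summands: the $i$-th term $-\gamma(1-x_i)x_i^2 + x_i(x_{i-1}-x_i)$, and (when $i < k$) the $(i+1)$-st term $x_{i+1}(x_i - x_{i+1})$, in which $x_i$ plays the role of the ``previous'' school. Differentiating gives $\partial_{x_i} U_\gamma = \gamma(3x_i^2 - 2x_i) + x_{i-1} - 2x_i + x_{i+1}$ for $1 \le i \le k-1$, and $\partial_{x_k} U_\gamma = \gamma(3x_k^2 - 2x_k) + x_{k-1} - 2x_k$; setting each to zero and rearranging yields exactly~(\ref{eq-der-0}), with the $i = k$ case recovered by adopting the convention $x_{k+1} = 0$. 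Since $U_\gamma$ is continuous on the compact set $\Delta$ it attains a maximum, and at an interior maximizer (the relevant case here) these first-order conditions must hold.

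For $\gamma = 0$, the recurrence~(\ref{eq-der-0}) reduces to $x_{i+1} - x_i = x_i - x_{i-1}$, so $x_0, x_1, \dots, x_{k+1}$ form an arithmetic progression; imposing $x_0 = 1$ and $x_{k+1} = 0$ forces common difference $-\tfrac{1}{k+1}$ and hence $x_i = \tfrac{k+1-i}{k+1}$. As an unordered set of schools this is precisely $\{\, i/(k+1) : i \in [k]\,\}$, the reversed labeling reflecting only that we index the portfolio in decreasing order.

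It remains to confirm that this stationary point is genuinely the maximizer rather than merely a critical point. For $\gamma = 0$, $U_0(x_1,\dots,x_k) = \sum_{i=1}^k (x_i x_{i-1} - x_i^2)$ is quadratic with Hessian the tridiagonal matrix having $-2$ on the diagonal and $1$ just off it; one checks that $-v^\top H v = v_1^2 + v_k^2 + \sum_{i=1}^{k-1}(v_i - v_{i+1})^2$, which is strictly positive for $v \ne 0$, so $U_0$ is strictly concave and has a unique maximizer over the convex set $\Delta$. Since the arithmetic solution above lies in the relative interior of $\Delta$ (strictly decreasing, all entries in $(0,1)$) and satisfies the first-order conditions, it is that unique maximizer.

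There is no serious obstacle in this proof: it is a careful differentiation followed by solving a linear recurrence. The two points that require care are (i) correctly bookkeeping the two summands in which each $x_i$ appears — especially the boundary term at $i = k$ — so that the recurrence comes out in exactly the stated form, and (ii) not stopping at the first-order conditions but invoking strict concavity via the Hessian computation to certify that the arithmetic portfolio is the global optimum.
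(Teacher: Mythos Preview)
Your proof is correct and follows essentially the same approach as the paper: compute $\partial U_\gamma/\partial x_i$, set it to zero to obtain the recurrence, and for $\gamma=0$ solve the resulting arithmetic-progression relation with boundary values $x_0=1$, $x_{k+1}=0$. You go further than the paper by explicitly verifying global optimality via the tridiagonal Hessian and strict concavity of $U_0$, which the paper omits; this is a welcome addition, and your handling of the $i=k$ boundary and the reversed indexing is cleaner than the paper's own exposition.
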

\begin{proof}
The first order  condition for optimality is: 
$$
    \frac{\partial U_\gamma}{\partial x_i} = -\gamma(2x_i - 3x_i^2) + x_{i-1} - 2x_i + x_{i+1} = 0 $$
as claimed. 

For an unbiased agent we get $x_{i+1} = 2x_i - x_{i-1}$. The solution to this system is an arithmetic progression with endpoints at $0$ and $1$. That is, $x_i = \frac{i}{k+1} $ for $i\in [k]$, corresponding to an equally spaced set of schools between $0$ and $1$.
\end{proof}

\begin{corollary} \label{cor:unbiased1/2}
    As $k\to \infty$, the optimal unbiased expected payoff converges to $\frac12$.
\end{corollary}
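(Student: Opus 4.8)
The plan is to use the explicit description of the unbiased optimum from Proposition~\ref{prop:spacing} and simply evaluate the payoff on it. By that proposition, for $\gamma = 0$ the optimal portfolio of $k$ schools is the equally spaced set $\{1/(k+1), 2/(k+1), \dots, k/(k+1)\}$. For an unbiased student the perceived expected utility equals the expected admissions outcome, so the quantity to compute is $U_0(x_1,\dots,x_k) = \sum_{i=1}^k x_i (x_{i-1} - x_i)$ with $x_0 = 1$, evaluated at this portfolio.

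The key observation I would use is that on the equally spaced portfolio every consecutive gap $x_{i-1} - x_i$ is exactly $1/(k+1)$ — including the top gap $x_0 - x_1 = 1 - k/(k+1) = 1/(k+1)$. Hence the payoff reduces to $\tfrac{1}{k+1}\sum_{i=1}^k x_i$, i.e. $\tfrac{1}{(k+1)^2}$ times the sum of an arithmetic progression $1 + 2 + \cdots + k = k(k+1)/2$, which evaluates to $\tfrac{1}{(k+1)^2}\cdot\tfrac{k(k+1)}{2} = \tfrac{k}{2(k+1)}$. Sending $k \to \infty$ then gives the limit $\tfrac12$.

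There is essentially no obstacle here; the only point requiring care will be the index bookkeeping and the boundary term $x_0 = 1$, which is precisely what makes all the gaps uniform and lets the sum collapse into a clean arithmetic series. (One could alternatively derive the same closed form by integrating the value of the attended school against the uniform score density over $[0,1]$, but substituting into the already-established expression for $U_\gamma$ is cleanest.)
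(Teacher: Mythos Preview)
Your proposal is correct and essentially identical to the paper's own proof: both substitute the equally spaced portfolio from Proposition~\ref{prop:spacing} into the unbiased payoff, observe that each term is $\frac{i}{k+1}\cdot\frac{1}{k+1}$, sum the arithmetic progression to obtain $\frac{k}{2(k+1)}$, and let $k\to\infty$.
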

\begin{proof}
    By Proposition~\ref{prop:spacing}, the optimal set of $k$ schools in $[0,1]$ is $\frac{i}{k+1}$ for $i\in [k]$, which results in an unbiased expected payoff of $$\sum_{i=1}^k \frac{i}{k+1} \cdot \frac{1}{k+1} = \frac{k(k+1)}{2(k+1)^2} = \frac{k}{2(k+1)}.$$ As $k$ grows large, this expression converges to $\frac12$.
\end{proof}

\subsection{Observations on the behavior of a $\gamma$-biased student}

Define $\Delta_i \coloneqq x_{i-1} - x_i$, then observe that (\ref{eq-der-0}) rearranges to
\begin{align*} 
    x_{i-1} - x_{i} - \gamma(2x_i - 3x_i^2) &= x_i - x_{i+1} \\
    \implies \Delta_{i+1} &= \Delta_{i} - \gamma(2x_i - 3x_i^2). \numberthis \label{eq:Deltagap}
\end{align*}

From the observation that $2x_i - 3x_i^2 < 0$ for $x_i > \frac23$ and $2x_i - 3x_i^2 \ge 0$ for $x_i\le \frac23$, we obtain our first characterization of the biased student's behavior, in contrast to the rational student's equally-spaced portfolio:

\begin{observation} \label{obs:Deltagaps}
    Above a threshold of $\frac{2}{3}$, $\Delta_{i+1} > \Delta_{i}$, so the gaps in competitiveness between successive schools increase in size. Below the threshold of $\frac23$, $\Delta_{i+1} < \Delta_{i}$, so the gaps in competitiveness are decreasing in size. (See Figure \ref{fig:deltas} for an example)
\end{observation}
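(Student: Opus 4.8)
The plan is to read the claim directly off the rearranged first-order condition~(\ref{eq:Deltagap}), which already expresses $\Delta_{i+1}$ in terms of $\Delta_i$ and $x_i$. Subtracting $\Delta_i$ from both sides gives
\[
\Delta_{i+1} - \Delta_i = -\gamma\,(2x_i - 3x_i^2) = -\gamma\, x_i\,(2 - 3x_i),
\]
so, since $\gamma > 0$, the sign of the increment $\Delta_{i+1} - \Delta_i$ is determined entirely by the sign of $-x_i(2 - 3x_i)$.

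Next I would note that every school in the portfolio satisfies $x_i \in (0,1)$ --- in particular $x_i > 0$ --- so the factor $x_i$ is strictly positive and does not affect the sign. Hence $\Delta_{i+1} - \Delta_i$ has the same sign as $3x_i - 2$: it is positive when $x_i > \frac{2}{3}$, negative when $x_i < \frac{2}{3}$, and zero at $x_i = \frac{2}{3}$. Rearranging, this is exactly the claim: $\Delta_{i+1} > \Delta_i$ above the threshold $\frac{2}{3}$ and $\Delta_{i+1} < \Delta_i$ below it.

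There is essentially no technical obstacle here; all the content lies in the already-derived recurrence~(\ref{eq:Deltagap}). The only points deserving a word of care are (i) using $x_i > 0$ so that the quadratic $2x_i - 3x_i^2$ may be replaced by its linear factor $2 - 3x_i$ for the purpose of the sign analysis, and (ii) observing that at the exact threshold $x_i = \frac{2}{3}$ the successive gaps are equal, which is why the statement separates the two strict-inequality regimes at $\frac{2}{3}$.
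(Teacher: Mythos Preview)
Your argument is correct and is essentially the same as the paper's: the paper obtains the observation directly from~(\ref{eq:Deltagap}) by noting that $2x_i - 3x_i^2$ is negative for $x_i > \tfrac{2}{3}$ and nonnegative for $x_i \le \tfrac{2}{3}$, which is exactly your sign analysis of $-\gamma x_i(2 - 3x_i)$.
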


\begin{figure}[h!]
    \centering
    \includegraphics[width=0.8\linewidth]{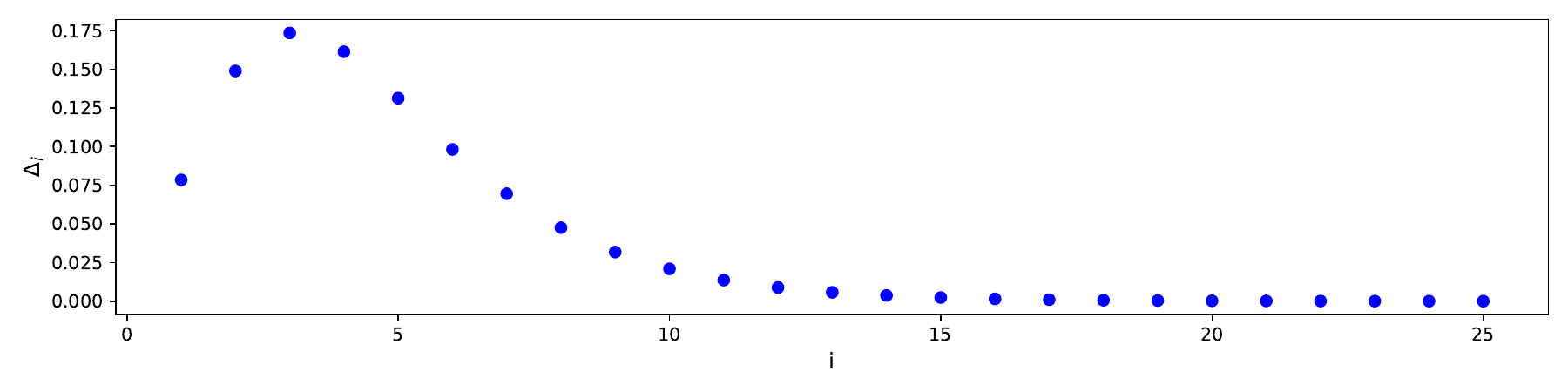}
    \caption{A plot of $\Delta_i$ for $\leq i \leq 25$ in an optimal biased portfolio for $k=25$ and $\gamma=0.1$.}
    \label{fig:deltas}
\end{figure}

Next, we observe that the perceived expected utility strictly increases with $k$. Thus, without application costs, a $\gamma$-biased student would apply to infinitely many schools. 

\begin{proposition} \label{prop:monotone_k}
    For any fixed $\gamma$, the perceived expected utility is strictly increasing in $k$. 
\end{proposition}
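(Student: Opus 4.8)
The plan is a one-step exchange argument: take an optimal portfolio of $k$ schools, insert one extra, sufficiently un-selective ``safety'' school just below the least selective school already chosen, and show this strictly increases the perceived utility; since the optimal $(k+1)$-portfolio does at least as well, the optimum strictly increases from $k$ to $k+1$. The structural fact that makes this work is visible in the formula $U_\gamma(x_1,\dots,x_k) = \sum_{i=1}^{k}\bigl[-\gamma(1-x_i)x_i^2 + x_i(x_{i-1}-x_i)\bigr]$ with $x_0 := 1$: the $i$-th summand depends only on $x_i$ and the next-more-selective school $x_{i-1}$, never on anything less selective than $x_i$. So inserting a new school at the very bottom of the list leaves every pre-existing summand untouched and changes $U_\gamma$ by exactly that new school's own summand.

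First I would note the optimum exists: $U_\gamma$ is continuous on the compact set of weakly decreasing $k$-tuples in $[0,1]$, so it attains its maximum there, say at $P = \{x_1 > \cdots > x_k\}$. If $x_k > 0$, then for $y \in (0,x_k)$ the portfolio $P \cup \{y\}$ has $k+1$ schools and
\[ U_\gamma(P\cup\{y\}) - U_\gamma(P) \;=\; -\gamma(1-y)y^2 + y(x_k - y) \;=\; y\bigl(x_k - y - \gamma y(1-y)\bigr). \]
The right-hand factor tends to $x_k > 0$ as $y \to 0^+$, so for all small enough $y$ this is strictly positive; hence the optimal $(k+1)$-value is at least $U_\gamma(P\cup\{y\}) > U_\gamma(P)$, the optimal $k$-value. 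The edge case is $x_k = 0$: here I would instead insert $y$ immediately above $0$ and below $x_{k-1}$ (which is positive since schools are distinct, unless $P=\{0\}$, which is not optimal for any $k$). The school sitting at $0$ contributes $0$ both before and after, so again the only net change is the new summand $-\gamma(1-y)y^2 + y(x_{k-1}-y)$, positive for small $y$. The same remark covers any number of trailing zero-valued schools: inserting a tiny $y$ just below the least selective strictly-positive school $x_j$ changes $U_\gamma$ by exactly $-\gamma(1-y)y^2 + y(x_j - y) \to y x_j > 0$.

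The argument holds for every fixed $\gamma$ (the sign of $\gamma$ never affects the limit $y(x_j - y - \gamma y(1-y)) \to y x_j$), and its conceptual content is a first-order-versus-second-order comparison near $x = 0$: a safety school of selectiveness $y$ contributes consumption utility linear in $y$ (it is attended whenever the student's score falls in $[y, x_j)$, and is obtained with probability tending to $1$), while the behavioral penalty it carries, $\gamma(1-y)y^2$, is only quadratic in $y$. The only place requiring any care is strictness — if the $k$-optimum were merely a supremum, the insertion estimate would give only a weak inequality — which is why I lean on compactness to know the optimum is attained; I expect this, plus the routine bookkeeping in the $x_k = 0$ case, to be the only friction in the proof.
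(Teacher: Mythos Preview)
Your proof is correct and follows essentially the same approach as the paper's: add a tiny safety school $y$ below the least selective school in an optimal $k$-portfolio, and observe that the marginal consumption gain $y(x_k-y)$ is linear in $y$ while the behavioral penalty $\gamma y^2(1-y)$ is quadratic, so for small $y$ the net change is positive. You supply a bit more rigor than the paper does---explicitly invoking compactness for existence of the optimum and handling the $x_k=0$ edge case---but the core idea is identical.
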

% \etcomment{I proposed below a compressed version of this proof. Let me know what you think} \ercomment{seems good}
\begin{proof} 
Let $OPT(k)$ denote the perceived expected payoff of the optimal $k$-portfolio. To prove the proposition, we need to argue that there exists a $k+1$-portfolio with higher payoff than $OPT(k)$. To see this, consider the lowest school $x_k$ in the optimal $k$-portfolio. Consider the \soedit{marginal} utility \soedit{from} applying to an additional school $y<x_k$: the negative bias term is $\gamma y^2(1-y)$ while the gain in expected consumption utility is $y(x_k-y)$. For a small enough $y$,
the gain is linear in $y$ while the loss is quadratic in $y$. Hence, there exists an extremely low ranked school $y$ that applying to increases the expected utility.
%a small school $y$ that applying to will increase the student's utility. for which this is always positive, showing that applying to an additional extremely low ranked school $y$ does increase the expected utility. 
\end{proof}
% \etdelete{
% \begin{proof}
%     Let $OPT(k)$ denote the perceived expected payoff of the optimal $k$-portfolio. Call the schools in the optimal $k-1$-portfolio $x_1 > x_2 > \dots x_{k-1}$, so that
%     $$ OPT(k-1) = \sum_{i=1}^{k-1} \left[ -\gamma (1 - x_i)x_i^2 + x_i (x_{i-1} - x_i) \right]. $$

%     For any $y \in (0, x_{k-1})$, $\{x_1, x_2, \dots, x_{k-1}, y\}$ is a possible $k$-portfolio of perceived expected value 
%     \begin{align*}
%         &\sum_{i=1}^{k-1} -\gamma (1 - x_i)x_i^2 - \gamma (1 - y) y^2 + \sum_{i=1}^{k-1} x_i (x_{i-1} - x_i) + y(x_{k-1} - y) \\
%         &= OPT(k-1) - \gamma (1 - y) y^2 + y(x_{k-1} - y) \\
%         &= OPT(k-1) + y(-\gamma y(1-y) + x_{k-1} - y).
%     \end{align*}
%     Observe that the inner quadratic in $y$ is equal to $x_{k-1} > 0$ at $y = 0$. By continuity there exists $y'$ sufficiently close to (but strictly greater than) $0$ such that $-\gamma y'(1-y') + x_{k-1} - y' > 0$, and consequently $y'(-\gamma y'(1-y') + x_{k-1} - y') > 0$. Then, the optimal $k$-portfolio is at least as good as $\{x_1, \dots, x_{k-1}, y'\}$, so
%     \begin{align*}
%         OPT(k) &\ge OPT(k-1) + y'(-\gamma y'(1-y') + x_{k-1} - y') \\
%         &> OPT(k-1).
%     \end{align*}    
% \end{proof}
% }

\section{Limited applications to top schools and its consequences}
%\section{Avoiding selective schools and its consequences} 
\label{sec:undershooting}

\subsection{Number of applications to top schools}

% \ercomment{this subsection heading (and the first sentence) feels a bit confusing, because first we show an ``absolute upper bound'' and then a ``too low number''. maybe it should just be ``Applications to top schools'', or ``Limited applications to top schools'', or ``Under-applying to / undershooting top schools''?}\etcomment{I thought the subsection heading is fine but this is a problem with the whole section title, they are not "avoiding" top schools. I proposed a changed title.}

%\ercomment{first we talk about the absolute upper bound, then the number of schools later -- suggesting to move the first half of this sentence to later in the subsection, or remove entirely (since we already have some text before Prop 5.3)}
One way to quantify a precise sense in which a $\gamma$-biased student ``under-applies'' is to show that there is an absolute upper bound on the selectiveness of (most of) their applications. That is, even when she is allowed an infinite number of applications, the second highest school in her portfolio remains strictly below $1$ (in contrast, the second highest rational school is $\frac{k-1}{k+1}$, which converges to $1$ as $k$ grows large). 

\begin{proposition}  \label{prop-x2-bounded}
   For any portfolio size $k$, $x_2 \leq h(\gamma)$:
   \begin{equation}
h(\gamma) =
\begin{cases}
1 - \gamma &  \gamma \in \left[0, \frac{1}{2}\right) \\
    \frac{1-\gamma+\gamma^2}{3\gamma} & \gamma \in \left[\frac{1}{2}, 2\right) \\
    \frac{1+2\gamma+ \sqrt{(1-2\gamma)^2-4\gamma}}{6\gamma} & \gamma \geq 2 \\
\end{cases}
\end{equation}
The function $h(\gamma)$ is illustrated in Figure \ref{fig:x2-bound}. 
\end{proposition}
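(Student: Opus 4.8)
The idea is to reduce the entire statement to the single first-order condition at the top of the portfolio. Taking $i=1$ in~(\ref{eq-der-0}) and using $x_0=1$ gives $x_2 = 2x_1-1+\gamma(2x_1-3x_1^2)$; writing $g(t):=2t-1+\gamma(2t-3t^2)=-3\gamma t^2+(2+2\gamma)t-1$, this says $x_2=g(x_1)$. So $x_2$ is a function of $x_1$ alone, and bounding $x_2$ amounts to a one-variable optimization: maximize $g$ over the values $x_1$ can legitimately take.

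I would first fix the admissible range of $x_1$. If $k\le 1$ there is no $x_2$ and the claim is vacuous, so assume $k\ge 2$. By Proposition~\ref{prop:monotone_k} the perceived utility is strictly increasing in $k$, hence an optimal $k$-portfolio can contain no ``wasted'' school --- a duplicate, or a school at $0$ or at $1$, each of which either lowers $U_\gamma$ or leaves it unchanged --- since deleting such a school would yield a $(k-1)$-portfolio of value at least $OPT(k)>OPT(k-1)$, a contradiction. Therefore $1>x_1>x_2>0$; in particular $x_1$ is an interior point of $(0,1)$, which justifies the stationarity condition at $i=1$, and $x_1>x_2=g(x_1)$, so $x_1$ lies in $F:=\{t\in(0,1):g(t)<t\}$. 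Consequently $x_2=g(x_1)\le \sup_{t\in F}g(t)$, and it suffices to prove $\sup_{t\in F}g(t)=h(\gamma)$.

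The remaining work is elementary calculus on the concave quadratic $g$. Its unconstrained maximizer is $t^*=\tfrac{1+\gamma}{3\gamma}$ with $g(t^*)=\tfrac{1-\gamma+\gamma^2}{3\gamma}$, and short computations give $g(t^*)-t^*=\tfrac{\gamma-2}{3}$, $g(1)=1-\gamma$, and $g(1)-1=-\gamma$. Case $\gamma\in[0,\tfrac12)$: here $t^*>1$, so $g$ is increasing on $(0,1)$, and since $g(t)-t\to-\gamma\le 0$ as $t\to 1$ the constraint is not binding near the right end, whence $\sup_{t\in F}g=g(1)=1-\gamma$. Case $\gamma\in[\tfrac12,2)$: here $t^*\in(0,1]$ and $g(t^*)<t^*$, so the global maximizer already lies in $F$ and $\sup_{t\in F}g=g(t^*)=\tfrac{1-\gamma+\gamma^2}{3\gamma}$. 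Case $\gamma\ge 2$: now $g(t^*)\ge t^*$, so $t^*\notin F$; the equation $g(t)=t$ is $3\gamma t^2-(1+2\gamma)t+1=0$, whose two roots $r_1<r_2$ are both positive (their product is $\tfrac{1}{3\gamma}$) and both less than $1$ (the vertex $\hat t=\tfrac{1+2\gamma}{6\gamma}$ lies in $(0,1)$ with $g(\hat t)-\hat t=\tfrac{(1-2\gamma)^2-4\gamma}{12\gamma}>0$, while $g(1)-1=-\gamma<0$), so $F\cap(0,1)=(0,r_1)\cup(r_2,1)$. On $(0,r_1)$ we have $g(t)<t<r_1\le r_2$; and since $g(t^*)-t^*\ge 0$ forces $t^*\le r_2$, $g$ is decreasing on $(r_2,1)$, so its supremum there is the left-endpoint limit $g(r_2)=r_2$. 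Hence $\sup_{t\in F}g=r_2=\tfrac{1+2\gamma+\sqrt{(1-2\gamma)^2-4\gamma}}{6\gamma}$. Assembling the three cases gives precisely $h(\gamma)$; as a check, $h$ is continuous at $\gamma=\tfrac12$ and $\gamma=2$, both junctions evaluating to $\tfrac12$.

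I expect the only real obstacle to be the $\gamma\ge 2$ regime. There one must recognize that the structural inequality $x_2<x_1$ is doing genuine work (the unconstrained optimum of $g$ overshoots $x_1$ and cannot be realized), then solve the resulting quadratic, verify that both roots lie in $(0,1)$, select the larger root $r_2$, and confirm that $g$ is monotone decreasing beyond $r_2$ so that $r_2$ is in fact the relevant supremum. The $\gamma<\tfrac12$ and $\tfrac12\le\gamma<2$ cases are routine single-variable calculus once the reduction $x_2=g(x_1)$ is in hand.
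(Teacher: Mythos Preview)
Your proposal is correct and follows essentially the same approach as the paper: express $x_2$ as the quadratic $g(x_1)=-3\gamma x_1^2+(2+2\gamma)x_1-1$ from the first-order condition at $i=1$, then maximize over $x_1$ subject to $x_2\le x_1$, splitting into the three regimes according to whether the unconstrained maximizer $t^*=\tfrac{1+\gamma}{3\gamma}$ exceeds $1$, lies in the feasible region, or violates $g(t^*)\le t^*$. Your write-up is in fact somewhat more careful than the paper's, since you explicitly justify (via Proposition~\ref{prop:monotone_k}) that $x_1$ is interior and $x_1>x_2$ strictly, and you verify in the $\gamma\ge 2$ case that both roots of $g(t)=t$ lie in $(0,1)$ and that $g$ is decreasing past $r_2$.
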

\begin{proof}
By rearranging Equation (\ref{eq-der-0}) we get that the following holds for $x_2$:
\begin{align*}
    x_2 = 2x_1 - 1 + \gamma(2x_1 - 3x_1^2) &= (-3\gamma) x_1^2 + (2+2\gamma) x_1 - 1 .
\end{align*}
We need to find the maximum of this over $x_1 \in [0,1]$ under the constraint that $x_2 \leq x_1$. By taking a derivative we get that the unconstrained maximum of the function $(-3\gamma) x_1^2 + (2+2\gamma) x_1 - 1 $ is $\frac{\gamma}{3} + \frac{1}{3\gamma} - \frac13,$ which is achieved at $x_1 = \frac{1+\gamma}{3\gamma}$. For $\gamma \le \frac12$, $\frac{1+\gamma}{3\gamma} \ge 1$, so the maximum over $x_1 \in [0,1]$ occurs at $x_1 = 1$, which gives the bound of $x_2 \le 1-\gamma.$ 

For $1/2<\gamma < 2$ we have $\frac{1+\gamma}{3\gamma} < 1$, and additionally that: $x_2 = \frac{1-\gamma+\gamma^2}{3\gamma} \leq \frac{1+\gamma}{3\gamma}$,
and hence the unconstrained maximal value of $x_2$ is indeed achieved respecting the constraint $x_1 \in [x_2,1]$, giving the bound $x_2\le \frac{1-\gamma+\gamma^2}{3\gamma}$. 

For $\gamma>2$, the maximum computed above is invalid as it gives $x_1<x_2$. Instead, we look for the maximal value of $(-3\gamma) x_1^2 + (2+2\gamma) x_1 - 1$ over the interval in which $(-3\gamma) x_1^2 + (2+2\gamma) x_1 - 1<x_1$. This holds for $x_1 \in \left[0, \frac{1+2\gamma- \sqrt{(1-2\gamma)^2-4\gamma }}{6\gamma}\right] \cup \left[\frac{1+2\gamma+ \sqrt{(1-2\gamma)^2-4\gamma }}{6\gamma}, 1\right]$, with the maximum occurring at $x_1=\frac{1+2\gamma+ \sqrt{(1-2\gamma)^2-4\gamma }}{6\gamma}$. Here, $x_2=x_1$, and we obtain the bound of $x_2\le \frac{1+2\gamma+ \sqrt{(1-2\gamma)^2-4\gamma }}{6\gamma}$.
\end{proof}
Recall that for an unbiased agent with a portfolio of $k$ schools, we have that $x_2=1-\frac{2}{k+1}$ and in general $x_i = 1-\frac{i}{k+1}$. This implies that as $k$ increases the unbiased agent would apply to more schools in each interval (a sort of ``uniform densification''). This is not the case for the biased agent. The bound of $x_2 \leq h(\gamma)$ implies the following corollary: 
\begin{corollary}
       For any $\gamma>0$ and $k>0$, the agent will not apply to more than one school above $h(\gamma)$. 
\end{corollary}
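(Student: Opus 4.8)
The plan is to read this off directly from Proposition~\ref{prop-x2-bounded}. That proposition bounds the \emph{second} most selective school $x_2$ in any optimal portfolio by $h(\gamma)$, uniformly over the portfolio size $k$, and the corollary is essentially just the observation that if $x_2 \le h(\gamma)$ then every school in the portfolio except the single most selective one lies weakly below $h(\gamma)$, hence cannot be strictly above it.

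In more detail: first I would dispose of the trivial case $k = 1$, where there is only one school and so certainly at most one above $h(\gamma)$. For $k \ge 2$, recall that in an optimal portfolio the schools are listed in strictly decreasing order of selectiveness, $x_1 > x_2 > \dots > x_k$, and that the interior first-order conditions~(\ref{eq-der-0}) hold at the optimum, so Proposition~\ref{prop-x2-bounded} applies and gives $x_2 \le h(\gamma)$. Since $x_j \le x_2$ for every $j \ge 2$, we get $x_j \le h(\gamma)$ for all $j \ge 2$; in particular no school with index $j \ge 2$ is strictly above $h(\gamma)$. Hence the only school that can exceed $h(\gamma)$ is $x_1$, and therefore the agent applies to at most one school above $h(\gamma)$, for every $\gamma > 0$ and every $k$.

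The only point that needs a word of care — and the closest thing to an obstacle here — is justifying that an optimal portfolio is indeed an interior critical point, so that Equation~(\ref{eq-der-0}), and therefore Proposition~\ref{prop-x2-bounded}, is applicable: one should note that no $x_i$ is pinned at $0$ or $1$ and that the coordinates are distinct at optimality (otherwise the ``portfolio of size $k$'' collapses to fewer schools, for which the claim is only easier). Modulo that standard verification, which is inherited from the setup of Proposition~\ref{prop:spacing} and Proposition~\ref{prop-x2-bounded}, the corollary is immediate and introduces no new estimates beyond the bound $h(\gamma)$ already established.
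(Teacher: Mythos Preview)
Your proposal is correct and matches the paper's approach: the paper simply states the corollary as an immediate consequence of the bound $x_2 \le h(\gamma)$ from Proposition~\ref{prop-x2-bounded}, with no additional argument, and your write-up spells out exactly this implication (together with the trivial $k=1$ case and the monotonicity $x_j \le x_2$ for $j \ge 2$).
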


\begin{figure}[thb]
    \centering
    \includegraphics[width=0.5\linewidth]{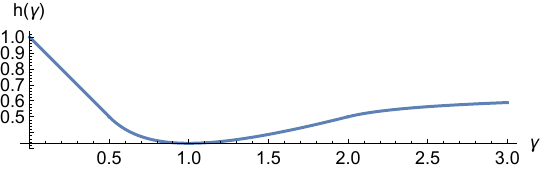}
    \caption{A plot of the function $h(\gamma)$ such that for any $k$, $x_2\leq h(\gamma)$. } 
    \label{fig:x2-bound}
\end{figure}

That is, the way in which reputation-based bias and the fear of rejection hurt a student is \emph{not} by simply causing her to stop applying altogether. Instead, a more subtle effect is at play: even given an infinite number of applications, a biased student fails to use them effectively; she applies to too few competitive schools and too many weak schools. 

\begin{proposition} \label{prop:above23bound}
 In a $k$ schools portfolio, the number of schools that the student applies to {above $2/3$} is at most $1$ for $\gamma>\frac{1}{3}$, and at most  $1 + \frac{1}{3\gamma}$ for $0< \gamma < \frac{1}{3}$.
\end{proposition}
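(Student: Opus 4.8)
The plan is to run everything off the gap recursion $\Delta_{i+1} = \Delta_i + \gamma(3x_i^2 - 2x_i)$ from~(\ref{eq:Deltagap}) together with Observation~\ref{obs:Deltagaps}, which says the gaps $\Delta_i$ strictly increase for exactly as long as the schools stay above $2/3$. Let $j$ denote the number of schools above $2/3$, so $x_1 > \dots > x_j > \tfrac23 \ge x_{j+1}$ (the last inequality vacuous if $j = k$). The case $\gamma > \tfrac13$ is immediate from Proposition~\ref{prop-x2-bounded}: one checks piecewise that $h(\gamma) < \tfrac23$ for every $\gamma > \tfrac13$ — for $\gamma \in (\tfrac13,\tfrac12)$ this is $1-\gamma < \tfrac23$; for $\tfrac12 \le \gamma < 2$ it is $\gamma^2 - 3\gamma + 1 < 0$, valid on $(\tfrac{3-\sqrt5}{2}, \tfrac{3+\sqrt5}{2}) \supseteq [\tfrac12, 2)$; and for $\gamma \ge 2$ the inequality $\sqrt{(1-2\gamma)^2 - 4\gamma} < 2\gamma - 1$ squares to $-4\gamma < 0$ — so $x_2 < \tfrac23$ and $j \le 1$.

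For $0 < \gamma < \tfrac13$, first rule out $j = k$: the boundary optimality condition $\partial U_\gamma/\partial x_k = 0$ gives $\Delta_k = x_k\bigl(1 + \gamma(2 - 3x_k)\bigr)$, and for $x_k \in (\tfrac23, 1)$ and $\gamma < \tfrac12$ this forces $\Delta_k > \tfrac23(1-\gamma) > \tfrac13$, contradicting $\Delta_k \le \sum_{i=1}^k \Delta_i = 1 - x_k < \tfrac13$. Hence $j < k$ and~(\ref{eq-der-0}) holds for all $i \le j$, so $\Delta_1 < \dots < \Delta_j$. Now suppose $j \ge 2$. Since a non-decreasing sequence has all of its prefix averages at most its overall average, $\sum_{\ell=1}^i \Delta_\ell \le \tfrac{i}{j}\sum_{\ell=1}^j \Delta_\ell = \tfrac{i}{j}(1 - x_j) < \tfrac{i}{3j}$, i.e.\ $x_i > 1 - \tfrac{i}{3j}$ for every $i \le j$. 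Since $3x^2 - 2x$ is increasing on $[\tfrac23, 1]$ and equals $(1-r)(1-3r)$ at $x = 1-r$, this gives the pointwise bound $3x_i^2 - 2x_i > \bigl(1 - \tfrac{i}{3j}\bigr)\bigl(1 - \tfrac{i}{j}\bigr)$. Summing the recursion, $\Delta_j - \Delta_1 = \gamma\sum_{i=1}^{j-1}(3x_i^2 - 2x_i)$, and feeding in this bound — and then iterating once more to lower-bound $1 - x_j = \sum_{\ell=1}^j \Delta_\ell$ itself rather than just $\Delta_j$ — produces a lower bound on $1 - x_j$ that grows quadratically in $j$; comparing with $1 - x_j < \tfrac13$ then bounds $j$.

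The step I expect to be the real obstacle is extracting the sharp constant $\tfrac{1}{3\gamma}$. The clean summation above only yields $j = O(1/\sqrt\gamma)$ (and the cruder version that stops at $\Delta_j < \tfrac13$ gives $j = O(1/\gamma)$ with a constant larger than $\tfrac13$), which already implies $j \le 1 + \tfrac{1}{3\gamma}$ once $\gamma$ lies below an absolute constant. For the remaining bounded window of $\gamma$ — where $1 + \tfrac{1}{3\gamma}$ is a small integer — I would finish with an explicit computation: from $x_2 \le h(\gamma) = 1 - \gamma$ and the first-order relation $x_3 = 2x_2 - x_1 + \gamma(2x_2 - 3x_2^2)$ (and $x_4$ analogously), maximize $x_3$ (resp.\ $x_4$) over the feasible set $\{x_2 \le x_1 \le 1\}$ subject to the $x_1$--$x_2$ link in~(\ref{eq-der-0}) and verify it does not exceed $\tfrac23$. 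Stitching the asymptotic bound (small $\gamma$) to this finite case check (moderate $\gamma$) is the fiddly part; the conceptual core is the chain ``monotone gaps $\Rightarrow x_i > 1 - \tfrac{i}{3j} \Rightarrow 1 - x_j$ grows like $\gamma j^2$.''
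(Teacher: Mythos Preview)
Your treatment of $\gamma > \tfrac13$ via Proposition~\ref{prop-x2-bounded} is correct and arguably tidier than the paper's route; the paper instead obtains $j \le 1$ there as a by-product of the same optimization it uses for small $\gamma$ (plus a separate calculation for $\gamma > 1$).

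For $0 < \gamma < \tfrac13$, however, you work much harder than necessary and leave the argument incomplete. The paper's proof is a single step, and the observation you missed is that $\Delta_2$ is an \emph{explicit} function of $x_1$ alone: since $\Delta_1 = 1 - x_1$, the recursion at $i=1$ gives $\Delta_2 = 1 - x_1 - \gamma(2x_1 - 3x_1^2)$. You already have the monotonicity $\Delta_2 \le \Delta_3 \le \cdots \le \Delta_j$, which immediately yields $j - 1 \le (x_1 - \tfrac23)/\Delta_2$. Maximizing the right-hand side over $x_1 \in [\tfrac23,1]$ --- the derivative has the sign of $1 - \gamma(2 - 3x_1)^2$, positive throughout for $\gamma < 1$ --- puts the supremum at $x_1 = 1$, namely $(1/3)/\gamma$, and the sharp bound $1 + \tfrac{1}{3\gamma}$ drops out with no iteration, no $O(1/\sqrt\gamma)$ asymptotic, and no case-by-case check for moderate $\gamma$. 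Your prefix-average route to $x_i > 1 - \tfrac{i}{3j}$ followed by summing the recursion is a valid (and indeed asymptotically stronger) estimate, but it is not needed for the stated inequality, and the promised ``stitching'' of the asymptotic to explicit checks of $x_3, x_4$ is never carried out. Incidentally, your separate handling of $j = k$ is also unnecessary: the gap-monotonicity argument only uses the first-order conditions at $x_1,\dots,x_{j-1}$, which hold regardless of whether $j = k$.
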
 
\begin{proof}
Suppose $x_1 \ge \frac23$ (otherwise we are trivially done). From Equation~(\ref{eq:Deltagap}) we have 
    $$\Delta_2 = \Delta_1  -\gamma (2x - 3x^2) = 1-x_1-\gamma (2x - 3x^2).$$
    Further, since by Observation~\ref{obs:Deltagaps} the $\Delta_i$'s are increasing while $x_i \ge \frac23$, we can bound the number of such $x_i$ by the following
    \begin{align*}
        c(\gamma) \le 1 + \frac{x_1 - \frac23}{\Delta_2} = 1+ \frac{x_1 - \frac23}{1-x_1-\gamma (2x - 3x^2)}
    \end{align*}
  The derivative with respect to $x_1$ of the RHS is $\frac{1 - \gamma(2-3x_1)^2}{3(3\gamma x_1^2 - (2\gamma+1)x_1+1)^2}$. For $0< \gamma <1$, the derivative is always positive on $[\frac23, 1]$, so we can take $x_1 = 1$ to get $c(\gamma) \le 1 + \frac{1}{3\gamma}.$  
  Moreover, for any $\frac{1}{3}<\gamma <1$, we have that $\frac{1}{3\gamma}<1$ and hence the student applies to at most a single school above $2/3$.
  For $\gamma>1$, the maximum on the right handside is obtained when the derivative is $0$: $1 - \gamma(2-3x_1)^2 = 0$. By solving for $x_1$ we get:
\begin{align*}
x_1=\frac{12\gamma \pm \sqrt{144\gamma^2+36\gamma(1-4\gamma)}}{18\gamma} = \frac{12\gamma \pm \sqrt{36\gamma}}{18\gamma} = \frac{2\gamma \pm \sqrt{\gamma}}{3\gamma}
\end{align*}
Thus, $x_1 = \frac{2}{3} \pm \frac{1}{3\sqrt{\gamma}}$. Note that $\frac{2}{3} - \frac{1}{3\sqrt{\gamma}} < 0$ for $\gamma>1$. The only root in the range $[2/3,1]$ is the positive one, so $x_1 = \frac{2}{3} + \frac{1}{3\sqrt{\gamma}}$. Note that it is indeed the case that the second derivative is negative for this value. Hence, we conclude that the maximum is obtained at $\frac{2}{3} + \frac{1}{3\sqrt{\gamma}}$ and:
\begin{align*}
c(\gamma) \leq 1 + \frac{\frac{1}{3\sqrt{\gamma}}}{1 - \frac{2}{3} - \frac{1}{3\sqrt{\gamma}} - \gamma(2(\frac{2}{3} + \frac{1}{3\sqrt{\gamma}}) - 3(\frac{2}{3} + \frac{1}{3\sqrt{\gamma}})^2)} = \frac{2 \gamma+2 \sqrt{\gamma}}{2 \gamma+2 \sqrt{\gamma}-1}
\end{align*}
Finally, we note that for $\gamma{>}1,$ $\frac{2 \gamma+2 \sqrt{\gamma}}{2 \gamma+2 \sqrt{\gamma}-1} < 2$, so the agent applies to at most one school above $2/3$. %\etcomment{something seems wrong here. is this true? maybe you mean $\gamma\ge 1$, i switched it}
\end{proof}

\subsection{Application to bounded number of schools above every constant}

In fact, for \emph{any} constant $c$, the number of schools that a $\gamma$-biased student applies above $c$ is bounded by a constant. To get this result we compare the potential benefit from applying to a school with the fixed subjective cost of the application and conclude that the number of schools that the student applies to in $[c, \frac23]$ is bounded by a constant.
\begin{proposition} \label{prop-const-below-23}
For any $k>0$ the student applies to at most $\frac{2/3-c}{\sqrt{\gamma c^2(1-c)}}+1$ schools in $[c,2/3]$.
\end{proposition}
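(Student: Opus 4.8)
The plan is to lower‑bound each of the ``internal gaps'' between consecutive schools the student places in $[c,\tfrac23]$ and then divide the total available length $\tfrac23-c$ by that bound. Write $y_1>y_2>\cdots>y_m$ for the schools of the optimal $k$‑portfolio lying in $[c,\tfrac23]$, and set $g_t:=y_t-y_{t+1}$ for $t=1,\dots,m-1$; these nonnegative gaps sum to $y_1-y_m\le\tfrac23-c$, so it suffices to show $g_t\ge\sqrt{\gamma c^2(1-c)}$ for every $t$, which then gives $(m-1)\sqrt{\gamma c^2(1-c)}\le\tfrac23-c$ and hence the claimed bound.

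The heart of the argument is a deletion (local‑optimality) estimate. I would compute the effect on $U_\gamma$ of removing a school $x_i$ having neighbors $x_{i-1}>x_i>x_{i+1}$ in the portfolio: the student saves the subjective cost $\gamma(1-x_i)x_i^2$, but $x_{i+1}$ must now cover the interval up to $x_{i-1}$, so the net change works out to $\gamma(1-x_i)x_i^2-\Delta_i\Delta_{i+1}$, where $\Delta_i=x_{i-1}-x_i$ and $\Delta_{i+1}=x_i-x_{i+1}$. Since the portfolio is optimal and, by Proposition~\ref{prop:monotone_k}, no $(k-1)$‑portfolio can beat it, this change is nonpositive, i.e.\ $\Delta_i\Delta_{i+1}\ge\gamma(1-x_i)x_i^2$. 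If moreover $x_i\le\tfrac23$, then Equation~(\ref{eq:Deltagap}) gives $\Delta_{i+1}\le\Delta_i$, so $\Delta_i^2\ge\Delta_i\Delta_{i+1}\ge\gamma(1-x_i)x_i^2$; and since $(1-x)x^2$ is increasing on $[0,\tfrac23]$ (derivative $2x-3x^2\ge0$ there), $x_i\in[c,\tfrac23]$ yields $(1-x_i)x_i^2\ge(1-c)c^2$, whence $\Delta_i\ge\sqrt{\gamma c^2(1-c)}$. Applying this with $x_i=y_{t+1}$ — which lies in $[c,\tfrac23]$ and, provided it is not the lowest school of the whole portfolio, has a neighbor below it — makes $\Delta_i$ exactly $g_t=y_t-y_{t+1}$, so $g_t\ge\sqrt{\gamma c^2(1-c)}$.

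The lone case not covered is $t=m-1$ with $y_m$ equal to the globally lowest school $x_k$, which has no school beneath it. Here I would instead invoke the first‑order condition for the last coordinate, which (lacking an $x_{k+1}$ term) rearranges to $\Delta_k=x_k+\gamma x_k(2-3x_k)\ge x_k$ since $x_k=y_m\le\tfrac23$, together with the deletion bound $\Delta_k\ge\gamma(1-x_k)x_k$; combining, $\Delta_k\ge x_k\max\{1,\gamma(1-x_k)\}\ge x_k\sqrt{\gamma(1-x_k)}=\sqrt{\gamma x_k^2(1-x_k)}\ge\sqrt{\gamma c^2(1-c)}$, using $\max\{1,u\}\ge\sqrt u$ for $u\ge0$ and again the monotonicity of $(1-x)x^2$. (When $m\ge2$ and $y_m=x_k$ one checks that $x_{k-1}=y_{m-1}$, so this $\Delta_k$ really is $g_{m-1}$.) With $g_t\ge\sqrt{\gamma c^2(1-c)}$ established for all $t\le m-1$, summing the gaps completes the proof.

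The main obstacle I anticipate is getting the deletion bookkeeping exactly right — tracking which sub‑intervals are reabsorbed after a school is removed so as to land on the clean product form $\Delta_i\Delta_{i+1}$ — and, relatedly, isolating the degenerate boundary case of the lowest‑ranked school, where the deletion argument degenerates and must be replaced by the terminal first‑order condition. The remaining ingredients (monotonicity of $(1-x)x^2$ on $[0,\tfrac23]$, and $\Delta_{i+1}\le\Delta_i$ below $\tfrac23$) are already in hand from Observation~\ref{obs:Deltagaps} and Equation~(\ref{eq:Deltagap}).
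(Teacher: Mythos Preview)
Your argument is correct and shares the paper's core idea --- the deletion inequality $\Delta_i\Delta_{i+1}\ge\gamma x_i^2(1-x_i)\ge\gamma c^2(1-c)$ for any interior school $x_i\in[c,\tfrac23]$, combined with the monotonicity of $x^2(1-x)$ on $[0,\tfrac23]$ --- but the conversion of this product bound into a count is organized differently. The paper proceeds by pigeonhole: among the $n(c,k)$ schools in $[c,\tfrac23]$ there must be one with $\Delta_i+\Delta_{i+1}\le\frac{2(2/3-c)}{n(c,k)-1}$, and then AM--GM bounds the product $\Delta_i\Delta_{i+1}\le\frac{(2/3-c)^2}{(n(c,k)-1)^2}$, forcing $n(c,k)-1\le\frac{2/3-c}{\sqrt{\gamma c^2(1-c)}}$ directly. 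You instead invoke Observation~\ref{obs:Deltagaps} to get $\Delta_{i+1}\le\Delta_i$ below $\tfrac23$, turning the product bound into a uniform lower bound $\Delta_i\ge\sqrt{\gamma c^2(1-c)}$ on \emph{every} gap, and then sum. Your route costs an extra case analysis for the bottom school $x_k$ (handled cleanly via the terminal first-order condition and $\max\{1,u\}\ge\sqrt{u}$), whereas the paper's pigeonhole selects an interior school and sidesteps that boundary; on the other hand, your per-gap lower bound is a sharper structural statement than what the paper extracts, and your treatment of the boundary is more explicit than the paper's. Both land on exactly the same final inequality.
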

\begin{proof}
%In fact, we prove a stronger claim showing that for any value of $\gamma>0$ and any $c>0$, the number of schools that a student that applies to in the interval $[c,2/3]$ is non-increasing \etreplace{in}{as a function of her overall application} $k$\etedit{, assuming $k>\frac{1}{\sqrt{\gamma c^2(1-c)}}$}. Formally, 
Let $n(c,k)$ denote the number of schools that an agent, who is applying to $k$ schools applies to in the interval $[c,2/3]$. We would like to show a limit on $n(c,k)$. Consider a portfolio of $k$ schools $x_1, \ldots, x_k$ with $n(c,k)$ schools in the interval $[c,2/3]$. Assuming $n(c,k)\ge 3$, there must be a school $x_i\in [c,2/3]$ such that 
$$x_{i+1}-x_{i-1}= (x_{i+1} - x_i)+(x_i-x_{i-1}) \le \frac{2(2/3-c)}{n(c,k)-1}.$$ 
Consider the marginal contribution loss when dropping this school $x_i$ from the portfolio. The lost consumption utility is 
    \begin{align*}
    (x_{i-1}-x_i)x_i -  (x_{i-1}-x_i)(x_{i+1}) = (x_{i-1}-x_i)(x_i-x_{i+1}) \le \frac{(2/3-c)^2}{(n(c,k)-1)^2}.
    \end{align*}
    %Since $x_{i+1}-x_{i-1} \le \frac{2(2/3-c)}{n(c,k)-1}$,
%By the bound on $(x_{i+1}-x_{i-1})$ this implies that 
%the decrease in consumption utility is at most $\frac{(2/3-c)^2}{(n(c,k)-1)^2}$. 

Now consider the subjective cost  $\gamma(x^2(1-x))$ of applying to a school $x$. Notice that $\gamma(x^2(1-x))$ is increasing in $[c,2/3]$. Thus, the subjective cost for applying to a school in $[c,2/3]$ is at least $\gamma(c^2(1-c))$. This implies that if
$\gamma(c^2(1-c)) > \frac{(2/3-c)^2}{(n(c,k)-1)^2}$
then dropping school $x_i$ improves the portfolio. This is impossible according to Proposition \ref{prop:monotone_k}, %which cannot happen in the optimal solution by Proposition \ref{prop:monotone_k}
implying  
$n(c,k)\le\frac{{2/3-c}}{\sqrt{\gamma c^2(1-c)}}+1.$
\end{proof}
% \etdelete{To simplify the inequality a bit we instead solve for
% \begin{align*}
%     \gamma(c^2(1-c)) > \frac{1}{k^2} \implies k > \frac{1}{\sqrt{\gamma c^2(1-c)}}
% \end{align*}
% Thus, for any $k>\frac{1}{\sqrt{\gamma c^2(1-c)}}$ the number of schools that the agent applies to in $[c,2/3]$ does not increase (e.g., $n(c,k+1) \leq n(c,k)$). This also trivially implies that $n(c,k) \leq \frac{1}{\sqrt{\gamma c^2(1-c)}}$.}

Putting this together with Proposition \ref{prop:above23bound} that provides a bound on the number of schools that the student applies to above $2/3$, we conclude that:
\begin{restatable}{corollary}{ngammac}\label{cor:n_gamma_c}
     A $\gamma$-biased student applies to at most ${m(\gamma, c)} = 2+ \frac{1}{3\gamma} + \frac{{2/3-c}}{\sqrt{\gamma c^2(1-c)}}$ schools above $c$ for $0<\gamma<\frac{1}{3}$, and at most ${m(\gamma, c)} = {2}+\frac{{2/3-c}}{\sqrt{\gamma c^2(1-c)}}$ schools for $\gamma\geq \frac{1}{3}$.
\end{restatable}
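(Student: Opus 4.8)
The plan is to derive this directly as a decomposition argument, since Propositions~\ref{prop:above23bound} and~\ref{prop-const-below-23} already control the two relevant sub-ranges of selectiveness. Fix any $\gamma > 0$, any portfolio size $k$, and any constant $c$; we may assume $c \in [0, 2/3)$, because for $c \ge 2/3$ the number of schools above $c$ is at most the number above $2/3$, and Proposition~\ref{prop:above23bound} alone already gives a constant bound. Given an optimal $k$-portfolio, partition its schools lying above $c$ into (i) those lying in $(2/3, 1]$ and (ii) those lying in $[c, 2/3]$; every school above $c$ falls into at least one of these two groups, so the total count is at most the sum of the two group sizes.

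For group (i) I would invoke Proposition~\ref{prop:above23bound}: at most $1 + \tfrac{1}{3\gamma}$ schools when $0 < \gamma < \tfrac13$, and at most $1$ school when $\gamma \ge \tfrac13$. For group (ii) I would invoke Proposition~\ref{prop-const-below-23}: at most $\tfrac{2/3 - c}{\sqrt{\gamma c^2(1-c)}} + 1$ schools. Adding the two bounds gives, for $0 < \gamma < \tfrac13$,
\[
m(\gamma,c) \le \left(1 + \tfrac{1}{3\gamma}\right) + \left(\tfrac{2/3-c}{\sqrt{\gamma c^2(1-c)}} + 1\right) = 2 + \tfrac{1}{3\gamma} + \tfrac{2/3-c}{\sqrt{\gamma c^2(1-c)}},
\]
and for $\gamma \ge \tfrac13$,
\[
m(\gamma,c) \le 1 + \left(\tfrac{2/3-c}{\sqrt{\gamma c^2(1-c)}} + 1\right) = 2 + \tfrac{2/3-c}{\sqrt{\gamma c^2(1-c)}},
\]
which is exactly the claimed statement. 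Since nothing in the argument depends on $k$, the same bound holds for every $k$, and hence also in the limit $k \to \infty$.

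There is essentially no real obstacle here — the content is entirely in the two prior propositions, and what remains is bookkeeping. The only points worth a remark are that the boundary point $x = 2/3$ could in principle be counted in both groups, which only weakens (never violates) the inequality and in any case is a single point within a continuum of schools, and that the split into the regimes $0 < \gamma < \tfrac13$ and $\gamma \ge \tfrac13$ is inherited verbatim from Proposition~\ref{prop:above23bound}. So the proof is a two-line combination, and I would present it as such.
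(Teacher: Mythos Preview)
Your proposal is correct and matches the paper's approach exactly: the corollary is stated immediately after Propositions~\ref{prop:above23bound} and~\ref{prop-const-below-23} as the direct sum of their bounds, with no additional argument beyond the decomposition you describe. The paper does not even write out the addition; your write-up is, if anything, more detailed than what appears there.
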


To get a better sense of this bound, in Figure \ref{fig:const-schools} we plot this bound on the number of schools above $c$ as a function of $c$ for several values of $\gamma$.
\begin{figure}[htp]
    \centering
    \includegraphics[width=0.5\linewidth]{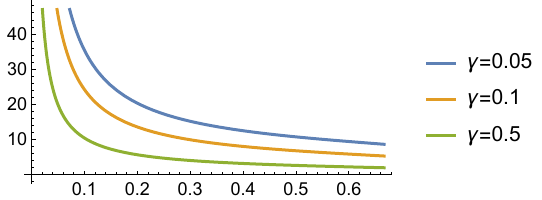}
    \caption{A bound on the number of schools above $c$ that a student applies to as a function of $c$.}
    \label{fig:const-schools}
\end{figure}

\subsection{Limited utility of biased students}
%\socomment{I changed it to theorem, like we said}
\begin{theorem}
    For every $\gamma > 0$, the true expected payoff of a student with bias $\gamma$ is bounded above by a constant $p(\gamma) < \frac{1}{2}$. Consequently, for every $\gamma > 0$, there exists $k = k(\gamma)$ such that a rational student limited to $k$ applications achieves strictly higher expected payoff than a student with bias $\gamma$ with an unlimited number of applications.
\end{theorem}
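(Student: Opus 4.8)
The plan is to prove the first sentence by showing that the biased student's chosen portfolio always loses a fixed amount of expected consumption utility to a ``hole'' of unselective applications near the top of $[0,1]$ --- a hole whose width is controlled by Proposition~\ref{prop-x2-bounded}. The second sentence then follows immediately from Corollary~\ref{cor:unbiased1/2}.

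First I would rewrite the true expected payoff of a portfolio $x_1 > x_2 > \cdots > x_k$ (with $x_0 = 1$) as $T(x) = \sum_{i=1}^{k} x_i(x_{i-1} - x_i)$ and apply the elementary identity $x_i(x_{i-1}-x_i) = \frac12(x_{i-1}^2 - x_i^2) - \frac12(x_{i-1}-x_i)^2$, which telescopes to
\[
T(x) \;=\; \frac{1 - x_k^2}{2} \;-\; \frac12\sum_{i=1}^{k} (x_{i-1}-x_i)^2 \;\le\; \frac12 \;-\; \frac12\Big((1-x_1)^2 + (x_1-x_2)^2\Big),
\]
where in the last step I drop $-x_k^2/2 \le 0$ and keep only the first two squared gaps $\Delta_1 = 1-x_1$ and $\Delta_2 = x_1 - x_2$ (this assumes $k \ge 2$; for $k \le 1$ the payoff is already at most $1/4$). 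Now, for the biased student's optimal portfolio, Proposition~\ref{prop-x2-bounded} gives $x_2 \le h(\gamma)$, so $\Delta_1 + \Delta_2 = 1 - x_2 \ge 1 - h(\gamma) > 0$, and by the power-mean (equivalently Cauchy--Schwarz) inequality $\Delta_1^2 + \Delta_2^2 \ge \frac12(\Delta_1+\Delta_2)^2 \ge \frac12(1-h(\gamma))^2$. Substituting yields
\[
T(x) \;\le\; \frac12 - \frac{(1-h(\gamma))^2}{4} \;=:\; p(\gamma),
\]
and $p(\gamma) < \frac12$ because $h(\gamma) < 1$ for every $\gamma > 0$ (one checks this directly in each of the three cases of Proposition~\ref{prop-x2-bounded}). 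Since the bound holds for every $k$, it also bounds the biased student's admissions outcome with an unlimited number of applications.

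For the ``consequently'' clause, recall from the proof of Corollary~\ref{cor:unbiased1/2} that a rational student restricted to $k$ applications attains expected payoff exactly $\frac{k}{2(k+1)}$, which is increasing in $k$ and converges to $\frac12$. Hence any integer $k(\gamma) > \frac{2 p(\gamma)}{1 - 2p(\gamma)}$ satisfies $\frac{k(\gamma)}{2(k(\gamma)+1)} > p(\gamma) \ge T(x)$ for every biased portfolio $x$, which is exactly the claimed strict separation. I do not expect a serious obstacle once Proposition~\ref{prop-x2-bounded} is available; the one thing to get right is the \emph{localization} of the lost utility --- realizing that it suffices to extract the whole deficit from just the top two spacings, whose sum $1-x_2$ is precisely the quantity Proposition~\ref{prop-x2-bounded} keeps bounded away from $0$, rather than attempting to control $\sum_i \Delta_i^2$ globally.
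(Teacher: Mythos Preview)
Your proof is correct and takes a genuinely different route from the paper's. The paper does not use the telescoping identity $T(x)=\tfrac{1-x_k^2}{2}-\tfrac12\sum_i\Delta_i^2$; instead it invokes Corollary~\ref{cor:n_gamma_c} (the bound $m(\gamma,c)$ on the number of applications above any threshold $c$, which in turn rests on Propositions~\ref{prop:above23bound} and~\ref{prop-const-below-23}), conditions on whether the student's score lands above or below $c$, and upper-bounds each piece by the rational optimum on that sub-interval to obtain $p(\gamma)=\tfrac12-\tfrac{(1-c)^2}{2(m(\gamma,c)+1)}$. Your argument is more elementary: it needs only the single-school bound $x_2\le h(\gamma)$ from Proposition~\ref{prop-x2-bounded} and two lines of algebra, bypassing the entire ``bounded number of schools above every constant'' machinery. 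The trade-off is quantitative: for small $\gamma$ your bound gives a deficit of order $\gamma^2$ (since $1-h(\gamma)=\gamma$), whereas the paper's bound, optimized over $c$, yields a deficit of order $\gamma$; but the theorem only asks for \emph{some} $p(\gamma)<\tfrac12$, so your route is a strict simplification. The ``consequently'' clause is handled identically in both proofs.
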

\begin{proof}
    First, recall from Corollary~\ref{cor:unbiased1/2} that as $k$ increases, the expected payoff of an unbiased agent converges to $\frac12$. % \etdelete{An alternative proof of this fact follows from considering the continuous analogue of an infinitely large, equally-spaced portfolio, and integrating with uniform density over $[0,1]$ to obtain $\EE [U_{OPT}] = \int_{0}^1 x dx = \frac{1}{2}$.}\etcomment{this is really the essence of the proof of Corollary~\ref{cor:unbiased1/2}, so no benefit from repeating here.} 
    This can also be seen by considering the continuous analogue of an infinitely large equally spaced portfolio, and integrating with uniform density over $[0,1]$. By conditioning on the student's score $x$ and splitting the integral at any constant $c$, we observe: 
    \begin{align*}
            \EE [U_{OPT}] &= \EE \left[U_{OPT} \mid x \le c \right] \cdot \PP \left( x \le c \right) +  \EE \left[U_{OPT} \mid x > c \right] \cdot \PP \left( x > c \right) \\
            &=  \EE \left[U_{OPT} \cdot \1_{\{x \le c \}} \right] + \EE \left[U_{OPT} \cdot \1_{\{x > c \}} \right] \\
            &= \int_0^{c} x dx + \int_{c}^{1} x dx = \frac{c^2}{2} + \frac{1-c^2}{2} = \frac12.
        \end{align*}

    Now, consider a student with bias $\gamma > 0$ applying to an infinitely large portfolio of schools ($k\to \infty$), by Corollary \ref{cor:n_gamma_c} {all but a finite} number of which is below $c$. If the student's score is below $c$, her true expected payoff can be no better than the unbiased optimal utility from a uniform continuum on $c$ (and in fact may be worse, since her bias causes her to apply suboptimally). That is, 
        $$\EE\left[U \cdot \1_{\{x \le c\}} \right] \le \EE \left[U_{OPT} \cdot \1_{\{x \le c \}} \right] = \frac{c^2}{2}.$$
        
    In the case where the student's score is above $c$, she only has a finite, constant number of chances $m=m(\gamma, c)$ to apply to and attend a school better than $\soedit{c}$. If she were to apply optimally, by extending the equal spacing part of Proposition \ref{prop:spacing} to intervals, we can conclude that she would choose the equally spaced set $c + \frac{i(1-c)}{m+1}$ for $i\in [m]$. However, the student is not necessarily applying optimally, so this set of schools provides an upper bound on her expected payoff. That is, 
        \begin{align*}
             \EE \left[U \cdot \1_{\{x > c \}} \right] \le \sum_{i=0}^{m} \left(c + \frac{i(1-c)}{m+1} \right) \frac{1-c}{m+1} &= \frac{c(1-c)}{m+1}(m+1) + \frac{(1-c)^2}{(m+1)^2} \frac{m(m+1)}{2} \\
             &= c(1-c) + \frac{n(1-c)^2}{2(m+1)} \\
             &= \frac{1-c}{2} \left[2c + 1-c 
 - \frac{1-c}{m+1} \right] \\
             &= \frac{1-c^2}{2} - \frac{(1-c)^2}{2(m+1)}.
        \end{align*}

    In total, 
    \begin{align*}
            \EE \left[U \right] &= \EE \left[U \cdot \1_{\{x \le c\}} \right] + \EE \left[U \cdot \1_{\{x > c\}} \right] \\
            &\le \frac{c^2}{2} + \frac{1-c^2}{2} - \frac{(1-c)^2}{2(m(\gamma, c)+1)} =\frac12- \frac{(1-c)^2}{2(m(\gamma, c)+1)} \eqqcolon p(\gamma) \\
            &< \frac12 = \EE \left[U_{OPT}\right],
        \end{align*}
        where the inequality is strict because $m(\gamma, c)$ is finite (by Corollary~\ref{cor:n_gamma_c}). 
    \begin{comment}
    \begin{figure}
        \centering
        \includegraphics[width=0.5\linewidth]{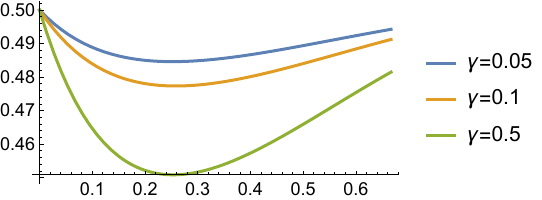}
        \caption{\soedit{This figure is the bounded we get on the utility as a function of $c$ it tells us that for every $\gamma$ there is an optimal value of $c$ which we should use in order to get the best bound of this argument. I added this plot first to make sure that everything make sense, you can see what you think and second the question is whether it is useful to include this here.}}
        \label{fig:enter-label}
    \end{figure}

    \begin{figure}[H]
    \centering
    \includegraphics[width=0.5\linewidth]{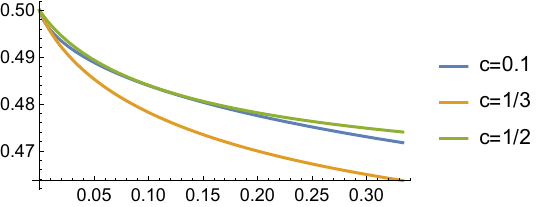}
    \caption{\soedit{Another one showing for different values of $\gamma$ which $c$ works better}}
    \label{fig:enter-label}
\end{figure}
\end{comment}

    Finally, observe that a rational student with $k$ applications obtains expected payoff $\frac{k}{2(k+1)}$. For $k > k(\gamma) = \frac{2p(\gamma)}{1 - 2p(\gamma)}$, this value is larger than $p(\gamma)$, meaning that even with an unlimited number of applications, the biased student has a lower expected payoff compared to a rational student with a portfolio size constraint. 
    
    %}

\end{proof}

\subsection{Computational results} \label{sec:comp_obs}

Finally, we highlight further interesting computational results that complement our theoretical analysis of the biased student's strategy across varying values of $k$ and $\gamma$ (detailed numerical results are provided in Appendix~\ref{sec:tables}). In line with Propositions~\ref{prop:above23bound} and~\ref{prop-const-below-23}, we observe \emph{downwards expansion}: given an increasing number of applications, the biased student only sends more applications to schools increasingly close to $0$.

\begin{observation} (computational)
    For fixed $\gamma > 0$, as $k$ increases, the values $x_i$ of the highest schools in the portfolio (and consequently, the gaps $\Delta_i$ between them) approximately ``freeze'' in place. That is, a student with bias $\gamma > 0$ only expands her portfolio \emph{downwards}, by sending out an increasing number of ``trivial'' applications close to $0$.
\end{observation}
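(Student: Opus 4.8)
To establish this observation rigorously, note first that its \emph{downwards-expansion} half is already implied by Corollary~\ref{cor:n_gamma_c}: for every constant $c\in(0,1)$ the optimal $k$-portfolio has at most $m(\gamma,c)$ schools above $c$, a bound independent of $k$, so at least $k-m(\gamma,c)\to\infty$ applications lie in $[0,c]$; as $c$ is arbitrary, the extra applications afforded by a larger budget accumulate arbitrarily close to $0$. The real content is the \emph{freezing} claim: that for each fixed $i$, the position $x_i^{(k)}$ of the $i$-th highest school in the optimal $k$-portfolio (hence each gap $\Delta_i^{(k)}$) converges as $k\to\infty$. The natural setup is dynamical: writing $x_0^{(k)}=1$, the first-order conditions of Proposition~\ref{prop:spacing} say that $(x_0^{(k)},\dots,x_k^{(k)})$ is generated by the planar map $G(u,v)=\bigl(v,\,2v-u+\gamma(2v-3v^2)\bigr)$ started at $(1,x_1^{(k)})$, with the terminal condition $x_{k+1}^{(k)}=0$ coming from $\partial U_\gamma/\partial x_k=0$. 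Thus the optimal $k$-portfolio is a ``shooting'' solution: the strictly decreasing orbit that, launched from the line $\{u=1\}$, first reaches $0$ at index exactly $k+1$ (strictness and positivity before index $k+1$ follow from Proposition~\ref{prop:monotone_k}, since a repeated or zero coordinate would collapse the portfolio).

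First I would produce a candidate limit. By coordinatewise compactness in $[0,1]$ and a diagonal argument (padding each portfolio with zeros past index $k$), pass to a subsequence along which $x_i^{(k)}\to y_i$ for all $i$, giving a nonincreasing sequence $\mathbf{y}=(y_0=1,y_1,y_2,\dots)$ that satisfies recurrence~(\ref{eq-der-0}) at every index (each instance involves only three consecutive coordinates, so survives the limit). Then $y_i\to 0$: if $\lim_i y_i=L$, the sign of $2y_i-3y_i^2$ is eventually constant, so by~(\ref{eq:Deltagap}) the gaps $\Delta_i=y_{i-1}-y_i$ either become negative (if $0<L<\frac23$) or grow without bound (if $L>\frac23$), and $L=\frac23$ is excluded directly since it forces $\Delta_1=1-y_1=\frac13$ yet $\Delta_2=\frac13-\gamma(2y_1-3y_1^2)=\frac13\neq 0=y_1-y_2$. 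Finally, $\mathbf{y}$ is optimal for the unbounded problem: $U_\gamma(\mathbf{x}^{(k)})=OPT(k)$ increases to $\sup_k OPT(k)$ (Proposition~\ref{prop:monotone_k}), and since Corollary~\ref{cor:n_gamma_c} gives $x_i^{(k)}=O_\gamma(1/i)$ uniformly in $k$, dominated convergence yields $U_\gamma(\mathbf{x}^{(k)})\to U_\gamma(\mathbf{y})=\sup_k OPT(k)$.

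What remains --- and what I expect to be the crux --- is \emph{uniqueness}: that all such subsequential limits coincide, equivalently that there is a single strictly decreasing orbit of $G$ from $\{u=1\}$ converging to the fixed point $(0,0)$. This needs global, not merely local, control of the dynamics: the origin is a saddle for $G$ (its linearization $z_{i+1}=(2+2\gamma)z_i-z_{i-1}$ has eigenvalues $(1+\gamma)\pm\sqrt{(1+\gamma)^2-1}$, one in $(0,1)$ and one above $1$), so the orbits that die at $0$ form a one-dimensional stable manifold $W^s$, and one must show $W^s$ crosses the segment $\{(1,a):a\in(0,1)\}$ exactly once. I would attempt this by a monotone shooting argument: let $T(a)$ be the first index at which the orbit from $(1,a)$ reaches $0$, and argue that on the relevant interval $T$ is strictly monotone with $T(a)\to\infty$ as $a$ tends to the $W^s$-value $a^{*}$ --- an orbit launched just off $W^s$ shadows the stable trajectory toward $0$ and then peels off along the unstable direction, with the side and timing of the peel-off varying monotonically in the launch point. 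Monotonicity of $T$ forces the shooting solutions $x_1^{(k)}$ (the roots of $T(a)=k+1$) to be monotone in $k$ and bounded, hence to converge to $a^{*}$, and propagating through~(\ref{eq-der-0}) then gives $x_i^{(k)}\to y_i$ and $\Delta_i^{(k)}\to y_{i-1}-y_i$ for every fixed $i$. A route that sidesteps the peel-off analysis is to prove uniqueness of the optimal \emph{infinite} portfolio directly: the Hessian of $U_\gamma$ is negative definite on the block of coordinates below $\frac13$ (where $-\gamma(x^2-x^3)$ is concave and a Gershgorin bound handles the tridiagonal coupling), and by Corollary~\ref{cor:n_gamma_c} only finitely many coordinates can exceed $\frac13$, so after optimizing out the uniquely determined tail one is left with a finite-dimensional problem for the top schools --- whose critical point must still be shown unique, which is where the remaining difficulty sits.
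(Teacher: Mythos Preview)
The paper offers no proof of this statement: it is explicitly flagged ``(computational),'' and the only evidence given is the numerical tables in Appendix~\ref{sec:tables}, together with a pointer back to Propositions~\ref{prop:above23bound} and~\ref{prop-const-below-23} for the weaker qualitative fact that only boundedly many schools lie above any fixed level. Your proposal therefore aims at something substantially stronger than what the paper claims or attempts.

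As a program toward an actual theorem, your shooting/dynamical-systems framing is the natural one, and the compactness step producing an infinite subsequential limit $(y_i)$ that still satisfies recurrence~(\ref{eq-der-0}) is sound; so is the dominated-convergence bookkeeping via Corollary~\ref{cor:n_gamma_c}. Your dispatch of the case $L=\tfrac23$ is garbled, though --- you write as if $y_1=L$, which it need not be; the correct argument is that $y_i$ nonincreasing with $y_i\to\tfrac23$ forces $y_i\ge\tfrac23$ for all $i$, whence by Observation~\ref{obs:Deltagaps} the gaps $\Delta_i$ are nondecreasing from $\Delta_2>0$ onward, contradicting $\Delta_i\to 0$. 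More importantly, you are right that the entire weight sits on uniqueness of the decreasing orbit on the stable manifold of $(0,0)$, and neither of your two sketches closes that gap: the monotone-shooting route requires that $T(a)$ be globally monotone in $a$, which is essentially the original problem restated, and the Hessian route still leaves a finite-dimensional critical-point-uniqueness question for the top coordinates (where $-\gamma x^2(1-x)$ is convex) that you do not resolve. So your proposal is a reasonable outline for a proof the paper never attempts, but it is not itself a proof.
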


This downwards expansion phenomenon stands in sharp contrast with the uniform densification of the rational student's portfolio (analogous to the reach/match/safety strategy described by \cite{as2023college}). As a result, increasing the number of applications $k$ is less helpful to the payoff of a $\gamma$-biased student than of a rational student; this effect becomes more pronounced as $\gamma$ increases.

\begin{observation} (computational)
    A student with bias $\gamma > 0$ quickly obtains most of her true expected utility within a small number of applications.
\end{observation}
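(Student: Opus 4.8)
The statement is a qualitative, experimentally-motivated claim, so I would first fix a precise version of it. Let $P_k$ be a maximizer of the perceived utility $U_\gamma$ among size-$k$ portfolios (the biased student's optimal $k$-portfolio), write $\mathrm{val}(Q)$ for the \emph{true} expected payoff of a portfolio $Q$ --- the expectation of the value of the school attended, with the behavioral terms stripped out --- and set $T(k)=\mathrm{val}(P_k)$. I would prove that $T(k)$ converges as $k\to\infty$ to a limit $T^{*}(\gamma)<\tfrac12$, and --- the quantitative content of ``obtains most of her true utility within a small number of applications'' --- that for every $\eps>0$ the threshold beyond which $T(k)$ stays within $\eps$ of $T^{*}(\gamma)$ is an explicit constant $k_0(\gamma,\eps)$ depending only on $\gamma$ and $\eps$. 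The bound $T^{*}(\gamma)<\tfrac12$ is already delivered by the theorem of Section~\ref{sec:undershooting}, which gives $T(k)\le p(\gamma)<\tfrac12$ for every $k$, so the work lies in the convergence and its rate.

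The workhorse is a truncation estimate combined with Corollary~\ref{cor:n_gamma_c}. First I would prove a \textbf{truncation lemma}: for any portfolio $Q$ and any $c\in(0,1)$, deleting from $Q$ every school of selectiveness at most $c$ lowers $\mathrm{val}$ by at most $c$. The reason is that the event ``the attended school has value exceeding $c$'' coincides with the event ``the student's score exceeds the least above-$c$ school of $Q$,'' and on that event the attended school --- hence its contribution to $\mathrm{val}$ --- is determined solely by the above-$c$ sub-portfolio $Q^{>c}$; on the complementary event the attended school has value at most $c$, contributing at most $c$ in expectation. Since adding schools never lowers $\mathrm{val}$, this gives $\mathrm{val}(P_k^{>c})\le T(k)\le \mathrm{val}(P_k^{>c})+c$, and by Corollary~\ref{cor:n_gamma_c} the sub-portfolio $P_k^{>c}$ contains at most the \emph{constant} $m(\gamma,c)$ schools. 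Everything is thereby reduced to understanding the top $m(\gamma,c)$ schools of $P_k$ as $k$ grows.

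The crucial remaining step is to show these top schools \emph{stabilize}: for each fixed index $i$, the $i$-th school $x_i(k)$ of $P_k$ converges as $k\to\infty$, with a modulus of convergence uniform over $i\le m(\gamma,c)$. (This is the ``freezing'' phenomenon of the preceding computational observation, now needed as a lemma.) The natural route is to view the first-order conditions of Proposition~\ref{prop:spacing} as a shooting problem: the recurrence $x_{i+1}=2x_i-x_{i-1}+\gamma(2x_i-3x_i^2)$ with $x_0=1$ determines the whole portfolio from the single unknown $x_1$, and $P_k$ corresponds to the value $x_1(k)$ for which the iteration first reaches $0$ at step $k+1$; one argues that $x_1(k)$ is monotone and bounded, hence convergent, and that the iteration depends continuously on $x_1$ over the bounded number of steps that remain above any fixed $c$ --- here Corollary~\ref{cor:n_gamma_c} is essential, since capping that number confines the instability of the recurrence to a bounded window. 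Then each $x_i(k)$ is Cauchy in $k$ at a uniform rate, so $\mathrm{val}(P_k^{>c})$ converges; combining this with $\mathrm{val}(P_k^{>c})\le T(k)\le \mathrm{val}(P_k^{>c})+c$ and letting $c\to0$ shows $T(k)$ converges to some $T^{*}(\gamma)$, and for $\eps$ given, choosing $c\le\eps/3$ and $k_0$ large enough (but constant) that the at most $m(\gamma,c)$ top schools have settled to within $\eps/3$ in $\mathrm{val}$ yields $|T(k)-T^{*}(\gamma)|\le\eps$ for all $k\ge k_0$.

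The main obstacle is exactly this stabilization step: making the shooting argument rigorous --- monotonicity and convergence of $x_1(k)$ together with a quantitative rate --- is delicate because the recurrence is expanding, so one cannot pass to the limit naively; the bounded-window structure guaranteed by Corollary~\ref{cor:n_gamma_c} is what makes it tractable, but the details need care. If a fully quantitative version proves too involved, a weaker but still meaningful statement follows immediately from the computation in the theorem of Section~\ref{sec:undershooting}, which already yields $T^{*}(\gamma)\le\tfrac12-\tfrac{(1-c)^2}{2(m(\gamma,c)+1)}$: the biased student's asymptotic true payoff never exceeds what even a \emph{rational} student could extract from only $m(\gamma,c)$ equally spaced applications --- a constant number --- which is the qualitative message that applications beyond a constant are essentially useless to her.
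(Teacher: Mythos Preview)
The paper offers no analytical proof of this statement. It is presented as a \emph{computational} observation in Section~\ref{sec:comp_obs}, and the sole support is the numerical Table~\ref{table:payoff_converge} in Appendix~\ref{sec:tables}, which tabulates the biased student's true expected payoff for $k$ ranging from $1$ to $100$ at several values of $\gamma$ and exhibits the rapid plateau empirically. There is no accompanying argument, not even a sketch.

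Your proposal therefore takes a genuinely different --- and far more ambitious --- route: you formalize the claim as convergence of $T(k)$ to a limit $T^{*}(\gamma)<\tfrac12$ with an explicit modulus, and you outline a proof via (i) a truncation lemma, (ii) the constant bound $m(\gamma,c)$ from Corollary~\ref{cor:n_gamma_c} on the number of above-$c$ schools, and (iii) stabilization of the top schools via a shooting argument on the recurrence of Proposition~\ref{prop:spacing}. Steps (i) and (ii) are correct and cleanly stated; your invocation of the theorem in Section~\ref{sec:undershooting} for the bound $T(k)\le p(\gamma)<\tfrac12$ is also accurate. What your approach buys, relative to the paper's purely numerical evidence, is an actual theorem; what the paper's approach buys is that it requires no work beyond running the dynamic program.

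The gap you yourself flag is real and is the entire difficulty: the stabilization step (monotonicity and convergence of $x_1(k)$, with rate) is not proved in your proposal, and the paper does not supply it either --- the preceding ``freezing'' observation is likewise labeled computational and is supported only by Table~\ref{table:freezing_gaps}. Your shooting formulation (recurrence with $x_0=1$, boundary condition $x_{k+1}=0$) is correct, and your idea of confining the expanding recurrence to the bounded window of size $m(\gamma,c)$ is the right instinct, but turning this into a rigorous Cauchy estimate for $x_1(k)$ would require additional work that neither you nor the paper carry out. Your fallback remark --- that the theorem of Section~\ref{sec:undershooting} already shows the biased student's asymptotic true payoff is dominated by what a rational student gets from a constant number of applications --- is a sound way to salvage the qualitative message without the stabilization lemma.
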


\section{Approximately solving for the $\gamma$-biased student's portfolio} \label{sec:gamma-0-convergence} \label{sec:approx}

In this section, we establish upper and lower bounds on each school in the portfolio of a $\gamma$-biased student. Then, as we take $\gamma$ to $0$, we see that the biased portfolio smoothly converges to the optimal rational portfolio. 
These results will also be useful in analyzing a student's overshooting behavior in the next section.
As a first step, {in Appendix \ref{app-approx} we use induction to} bound each school $x_i$ relative to its neighbor $x_{i-1}$:
%\subsection{Approximately Solving for the Optimal Portfolio for a $\gamma$-biased Agent}

\begin{lemma} \label{lem-xi-bounds}
For every $1 \leq i \leq k$, let $\bar \delta_i$ and $\underline \delta_i$ be such that: $\underline{\delta_i} \leq -(2x_i-3x_i^2) \leq \bar \delta_i$:
\begin{align*}
\frac{k+1-i}{k+2-i} x_{i-1} + \frac{\gamma}{k+2-i} \sum_{j=1}^{k+1-i} j \cdot  \underline{\delta}_{k+1-j} \leq x_{i} \leq \frac{k+1-i}{k+2-i}  \cdot x_{i-1} + \frac{\gamma}{k+2-i} \sum_{j=1}^{k+1-i} j \cdot  \bar{\delta}_{k+1-j}.
\end{align*}
\end{lemma}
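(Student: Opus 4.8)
The plan is to derive the exact closed form
$$x_i \;=\; \frac{k+1-i}{k+2-i}\,x_{i-1} \;+\; \frac{\gamma}{k+2-i}\sum_{j=1}^{k+1-i} j\,\delta_{k+1-j}, \qquad \text{where } \delta_m \coloneqq -(2x_m-3x_m^2),$$
and then read off both inequalities of Lemma~\ref{lem-xi-bounds} by monotonicity. Throughout I keep the boundary conventions $x_0=1$ (as in the paper) and $x_{k+1}=0$; the latter is legitimate because the first-order condition $\partial U_\gamma/\partial x_k = 0$ is exactly (\ref{eq-der-0}) at $i=k$ with the symbol $x_{k+1}$ set to $0$ --- equivalently, there is an implicit ``school $0$'' that the student is always admitted to. With this convention, (\ref{eq-der-0}) reads $x_{i+1}=2x_i-x_{i-1}-\gamma\delta_i$ for every $1\le i\le k$, equivalently $x_{i-1}=2x_i-x_{i+1}-\gamma\delta_i$.

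I would prove the displayed identity by induction on $k+1-i$, i.e.\ starting from $i=k$ and decreasing $i$. The base case $i=k$ uses only $x_{k+1}=0$: the recurrence gives $x_{k-1}=2x_k-\gamma\delta_k$, so $x_k=\tfrac12 x_{k-1}+\tfrac{\gamma}{2}\delta_k$, which is the claimed identity with $k+1-i=1$ and $k+2-i=2$. For the inductive step, assume the identity at index $i+1$, namely $x_{i+1}=\tfrac{k-i}{k-i+1}x_i+\tfrac{\gamma}{k-i+1}\sum_{j=1}^{k-i}j\,\delta_{k+1-j}$, substitute it into $x_{i-1}=2x_i-x_{i+1}-\gamma\delta_i$, collect the $x_i$ terms using $2-\tfrac{k-i}{k-i+1}=\tfrac{k-i+2}{k-i+1}$, and solve for $x_i$. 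The only bookkeeping point is that the two residual $\gamma$-terms, $\tfrac{\gamma}{k-i+2}\sum_{j=1}^{k-i}j\,\delta_{k+1-j}$ and $\tfrac{\gamma(k-i+1)}{k-i+2}\,\delta_i$, merge into the single sum $\tfrac{\gamma}{k-i+2}\sum_{j=1}^{k-i+1}j\,\delta_{k+1-j}$, because $\delta_i$ is precisely the $j=k+1-i$ term and already carries the coefficient $k+1-i$. This gives the identity at index $i$ and closes the induction.

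Finally, over the summation range $1\le j\le k+1-i$ we have $1\le k+1-j\le k$, so every $\delta_{k+1-j}$ is a genuine term of the sequence and, by hypothesis, $\underline{\delta}_{k+1-j}\le \delta_{k+1-j}\le \bar{\delta}_{k+1-j}$; since each multiplier $j$ is a positive integer, substituting these one-sided bounds term by term yields exactly the lower and upper bounds on $x_i$ stated in the lemma. I expect the only real care needed is the index translation $m\mapsto k+1-m$ inside the sum and keeping the boundary term $x_{k+1}=0$ straight; everything else is elementary algebra. (As an alternative to the induction, one can telescope the gap recursion $\Delta_{m+1}=\Delta_m+\gamma\delta_m$ of (\ref{eq:Deltagap}) starting from $\Delta_i$, use $x_{i-1}=\sum_{m=i}^{k+1}\Delta_m$ together with $x_{k+1}=0$, swap the order of the resulting double sum, and solve for $\Delta_i$ and hence $x_i$; this produces the same closed form directly.)
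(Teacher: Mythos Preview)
Your proposal is correct and follows essentially the same backward induction from $i=k$ (with the convention $x_{k+1}=0$) that the paper uses in Appendix~\ref{app-approx}. The one organizational difference is that you first establish the exact identity $x_i=\tfrac{k+1-i}{k+2-i}x_{i-1}+\tfrac{\gamma}{k+2-i}\sum_{j=1}^{k+1-i}j\,\delta_{k+1-j}$ and only then substitute the bounds $\underline{\delta}_m\le\delta_m\le\bar{\delta}_m$, whereas the paper threads the inequalities through the induction; your version is slightly cleaner in that it separates the linear-algebra step from the estimation step and gets both bounds at once, but the underlying argument is identical.
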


Using Lemma~\ref{lem-xi-bounds}, we now establish the following bounds on each $x_i$ solely as a general function of $i$, $k$, and $\gamma$:

\begin{theorem} \label{thm-bounds-general}
For every $x_i$, let $\bar \delta_i$ and $\underline \delta_i$ be such that: $\underline{\delta_i} \leq -(2x_i-3x_i^2) \leq \bar \delta_i$. Then, the following holds for any $1 \leq i \leq k$:
\begin{align*}
 x_i &\geq \frac{k+1-i}{k+1} +\gamma \cdot \frac{k+1-i}{k+1} \sum_{j=1}^{i-1} j \cdot \underline{\delta}_j + \gamma \cdot \frac{i}{k+1} \sum_{j=1}^{k+1-i} j \cdot \underline{\delta}_{k+1-j},    \\
 x_i &\leq \frac{k+1-i}{k+1} + \gamma \cdot \frac{k+1-i}{k+1} \sum_{j=1}^{i-1} j \cdot \bar{\delta}_j + \gamma \cdot \frac{i}{k+1} \sum_{j=1}^{k+1-i} j \cdot \bar \delta_{k+1-j}.
\end{align*}
\end{theorem}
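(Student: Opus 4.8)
The plan is to treat Lemma~\ref{lem-xi-bounds} as a one-step linear recursion in $i$ and unroll it down to the boundary value $x_0 = 1$. Abbreviate $a_i = \frac{k+1-i}{k+2-i}$ and $\bar b_i = \frac{\gamma}{k+2-i}\sum_{j=1}^{k+1-i} j\,\bar\delta_{k+1-j}$, so that the upper bound in Lemma~\ref{lem-xi-bounds} reads $x_i \le a_i\,x_{i-1} + \bar b_i$, and the lower bound is the same statement with the analogous $\underline b_i$ (built from $\underline\delta$) in place of $\bar b_i$. Since every $a_i \ge 0$, iterating the inequality preserves its direction, giving
\begin{equation*}
x_i \;\le\; \Big(\prod_{\ell=1}^{i} a_\ell\Big)\,x_0 \;+\; \sum_{m=1}^{i}\Big(\prod_{\ell=m+1}^{i} a_\ell\Big)\,\bar b_m ,
\end{equation*}
and the matching lower bound on $x_i$ is obtained verbatim with each $\underline b_m$ replacing $\bar b_m$; so it suffices to carry out the upper-bound computation.

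Two telescopings do the rest. First, $\prod_{\ell=m+1}^{i} a_\ell = \frac{k+1-i}{k+1-m}$, and in particular $\prod_{\ell=1}^{i} a_\ell = \frac{k+1-i}{k+1}$. Plugging in $x_0 = 1$, the displayed bound becomes
\begin{equation*}
x_i \;\le\; \frac{k+1-i}{k+1} \;+\; \gamma\,(k+1-i)\sum_{m=1}^{i}\frac{1}{(k+1-m)(k+2-m)}\sum_{j=1}^{k+1-m} j\,\bar\delta_{k+1-j},
\end{equation*}
whose leading term already matches the claim. It therefore remains to identify the double sum with $\frac{k+1-i}{k+1}\sum_{j=1}^{i-1} j\,\bar\delta_j + \frac{i}{k+1}\sum_{j=1}^{k+1-i} j\,\bar\delta_{k+1-j}$.

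I would prove this identity by swapping the order of summation: reindexing the inner sum by $\ell = k+1-j$ turns it into $\sum_{\ell=m}^{k}(k+1-\ell)\,\bar\delta_\ell$, so the double sum equals $\sum_{\ell=1}^{k}(k+1-\ell)\,\bar\delta_\ell\sum_{m=1}^{\min(i,\ell)}\frac{k+1-i}{(k+1-m)(k+2-m)}$. The second telescoping is the partial fraction $\frac{1}{(k+1-m)(k+2-m)} = \frac{1}{k+1-m}-\frac{1}{k+2-m}$, which collapses the inner $m$-sum to $(k+1-i)\big(\frac{1}{k+1-\min(i,\ell)}-\frac{1}{k+1}\big)$. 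Splitting on $\ell < i$ versus $\ell \ge i$: for $\ell < i$ the coefficient is $\frac{(k+1-i)\,\ell}{(k+1-\ell)(k+1)}$, whose factor $(k+1-\ell)$ cancels to leave $\frac{k+1-i}{k+1}\sum_{\ell=1}^{i-1}\ell\,\bar\delta_\ell$; for $\ell \ge i$ it collapses to $\frac{i}{k+1}$, and reverting $\ell \mapsto k+1-j$ recovers $\frac{i}{k+1}\sum_{j=1}^{k+1-i} j\,\bar\delta_{k+1-j}$. The only thing needing care is the bookkeeping — keeping the two reindexings $\ell = k+1-j$ straight and handling the $\min(i,\ell)$ split correctly (in particular that $\ell = i$ falls in the second case and contributes the $\bar\delta_i$ term there); there is no genuine analytic obstacle, since the whole argument is an unrolling of a linear recursion followed by two telescoping sums, and the cases $i = 1, 2$ provide easy sanity checks. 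The lower bound in the theorem follows by repeating the identical computation with $\underline\delta$ throughout.
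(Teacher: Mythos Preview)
Your proposal is correct and rests on the same idea as the paper's proof: both start from Lemma~\ref{lem-xi-bounds} and iterate it down to the boundary $x_0=1$. The paper packages the iteration as a forward induction on $i$, so that at each step one only has to check how the coefficient of a single new $\bar\delta_{i-1}$ enters and that the coefficients of the remaining $\bar\delta_j$ combine correctly; you instead unroll the recursion in closed form and then prove the resulting double-sum identity by swapping the order of summation and using the partial-fraction telescoping $\frac{1}{(k+1-m)(k+2-m)}=\frac{1}{k+1-m}-\frac{1}{k+2-m}$. The two arguments are equivalent, with the induction hiding exactly the bookkeeping that your explicit computation spells out.
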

\begin{proof}
We use  Lemma \ref{lem-xi-bounds} to solve for $x_i$ and prove by induction that the theorem holds. As the proof is the same for the upper and lower bounds, we will only write the proof for the upper bound.
The base case is $i=1$. We have by Lemma \ref{lem-xi-bounds}: 
\begin{align*}
x_1 &\leq\frac{k}{k+1} + \gamma \cdot \frac{1}{k+1} \sum_{j=1}^{k} j \cdot \bar \delta_{k+1-j}. 
\end{align*}
Next, we assume the induction hypothesis holds for $i-1$ and prove for $i$. By Lemma \ref{lem-xi-bounds}:
\begin{align*}
     x_{i} \leq \frac{k+1-i}{k+2-i}  \cdot x_{i-1} + \frac{\gamma}{k+2-i} \sum_{j=1}^{k+1-i} j \cdot  \bar{\delta}_{k+1-j}.
\end{align*}
By using the induction hypothesis for $x_{i-1}$ we get:
\begin{align*}
    x_{i} &\leq \frac{k+1-i}{k+2-i}  \left(\frac{k+2-i}{k+1} + \gamma \cdot \frac{k+2-i}{k+1} \sum_{j=1}^{i-2} j \cdot \bar{\delta}_j + \gamma \cdot \frac{i-1}{k+1} \sum_{j=1}^{k+2-i} j \cdot \bar \delta_{k+1-j} \right) + \frac{\gamma}{k+2-i} \sum_{j=1}^{k+1-i} j \cdot  \bar{\delta}_{k+1-j} \\
    &\leq \frac{k+1-i}{k+1} + \gamma \frac{k+1-i}{k+1} \sum_{j=1}^{i-2} j \cdot \bar{\delta}_j + \gamma \frac{k+1-i}{k+2-i} \cdot \frac{i-1}{k+1} \sum_{j=1}^{k+2-i} j \cdot \bar \delta_{k+1-j} + \frac{\gamma}{k+2-i} \sum_{j=1}^{k+1-i} j \cdot  \bar{\delta}_{k+1-j}.
\end{align*}
Note that $\bar{\delta}_{i-1}$ only appears in the second sum with a coefficient of: 
\begin{align*}
\gamma \frac{k+1-i}{k+2-i} \cdot \frac{i-1}{k+1} \cdot (k+2-i) = \gamma \frac{k+1-i}{k+1} \cdot (i-1).
\end{align*}
Also, for every $j\geq i$ the coefficient of $\bar{\delta}_j$ is (from the second and third summations):
\begin{align*}
    \gamma \cdot j \cdot \left( \frac{k+1-i}{k+2-i} \cdot \frac{i-1}{k+1} + \frac{1}{k+2-i} \right) = \gamma \cdot j \cdot \frac{i}{k+1}.
\end{align*}
Thus, we have that:
\begin{align*}
x_i &\leq \frac{k+1-i}{k+1} + \gamma \cdot \frac{k+1-i}{k+1} \sum_{j=1}^{i-1} j \cdot \bar{\delta}_j + \gamma \cdot \frac{i}{k+1} \sum_{j=1}^{k+1-i} j \cdot \bar \delta_{k+1-j}.
\end{align*}
\end{proof}

Now, as a first application of Theorem \ref{thm-bounds-general} we observe that for every $x_i$ we have that the required bounds hold with $\bar{\delta}_i=1$ and $\underline{\delta_i}=1/3$, which establishes the following bounds on each $x_i$:
\begin{theorem} \label{thm-bounds}
For every $k$ and every $\gamma>0$ we have that, for every $1\leq i \leq k$:
\begin{align*}
 \frac{k+1-i}{k+1} - \frac{\gamma}{3} \cdot \frac{i(k+1-i)}{2} \leq   x_i \leq \frac{k+1-i}{k+1} + \gamma \cdot \frac{i(k+1-i)}{2}.
\end{align*}
\end{theorem}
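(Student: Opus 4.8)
The plan is to invoke Theorem~\ref{thm-bounds-general} with the crudest possible choices of $\bar\delta_i$ and $\underline\delta_i$, namely constants that do not depend on the actual values $x_i$. We need bounds $\underline{\delta_i} \leq -(2x_i - 3x_i^2) \leq \bar\delta_i$ valid over the feasible range $x_i \in [0,1]$. Writing $g(x) = -(2x - 3x^2) = 3x^2 - 2x$, elementary single-variable calculus on $[0,1]$ gives $g(x) \leq g(1) = 1$ (the maximum on the interval, attained at the right endpoint since $g$ is a upward parabola with vertex at $x=1/3$) and $g(x) \geq g(1/3) = 3\cdot\frac19 - \frac23 = -\frac13$ (the vertex minimum). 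Hence $\bar\delta_i = 1$ and $\underline{\delta_i} = -\tfrac13$ are valid uniform choices. I should double-check the sign convention here against the excerpt: the theorem statement writes $\underline{\delta_i} \le -(2x_i - 3x_i^2) \le \bar\delta_i$, and with the conclusion as stated ($x_i \ge \frac{k+1-i}{k+1} + \gamma(\cdots)\underline\delta_j$) the lower-bound direction wants the \emph{smallest} value of $-(2x_i-3x_i^2)$, which is $-\frac13$; the displayed conclusion in Theorem~\ref{thm-bounds} uses $-\frac{\gamma}{3}$, consistent with $\underline{\delta_i} = -\tfrac13$. Good.

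The second step is pure bookkeeping: substitute these constants into the two inequalities of Theorem~\ref{thm-bounds-general} and evaluate the resulting sums. For the upper bound, with $\bar\delta_j = 1$ for all $j$,
\[
\gamma \cdot \frac{k+1-i}{k+1}\sum_{j=1}^{i-1} j + \gamma\cdot\frac{i}{k+1}\sum_{j=1}^{k+1-i} j
= \gamma\cdot\frac{k+1-i}{k+1}\cdot\frac{(i-1)i}{2} + \gamma\cdot\frac{i}{k+1}\cdot\frac{(k+1-i)(k+2-i)}{2}.
\]
Factoring out $\gamma\cdot\frac{i(k+1-i)}{2(k+1)}$ leaves $(i-1) + (k+2-i) = k+1$, so the whole correction term collapses to exactly $\gamma\cdot\frac{i(k+1-i)}{2}$, giving $x_i \le \frac{k+1-i}{k+1} + \gamma\cdot\frac{i(k+1-i)}{2}$. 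The lower bound is identical except every $\bar\delta_j$ is replaced by $\underline{\delta_j} = -\tfrac13$, which just multiplies the collapsed correction term by $-\tfrac13$, yielding $x_i \ge \frac{k+1-i}{k+1} - \frac{\gamma}{3}\cdot\frac{i(k+1-i)}{2}$.

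There is essentially no obstacle here — the theorem is an immediate corollary of Theorem~\ref{thm-bounds-general} together with the trivial bracketing of the quadratic $3x^2 - 2x$ on $[0,1]$. The only thing requiring the slightest care is the arithmetic simplification of the two triangular-number sums, where one must notice the telescoping cancellation $(i-1) + (k+2-i) = k+1$ that makes the common denominator $k+1$ disappear and produces the clean symmetric expression $\frac{i(k+1-i)}{2}$. I would present it as: ``Apply Theorem~\ref{thm-bounds-general} with $\bar\delta_i = 1$ and $\underline{\delta_i} = -\tfrac13$, valid since $-(2x-3x^2)\in[-\tfrac13,1]$ for $x\in[0,1]$; evaluating $\sum_{j=1}^{m} j = \binom{m+1}{2}$ in both correction terms and simplifying gives the claim.''
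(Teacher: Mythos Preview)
Your proposal is correct and follows essentially the same approach as the paper: apply Theorem~\ref{thm-bounds-general} with the uniform bounds $\bar\delta_i = 1$ and $\underline{\delta_i} = -\tfrac13$ (valid since $-(2x-3x^2)\in[-\tfrac13,1]$ on $[0,1]$), then simplify the two triangular sums via the cancellation $(i-1)+(k+2-i)=k+1$ to obtain the symmetric correction term $\gamma\cdot\frac{i(k+1-i)}{2}$. The paper's write-up is slightly more verbose in displaying the intermediate algebraic steps, but the argument is identical.
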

\begin{proof}
Note that for any $x_i$, we have that  $-1/3 \leq -(2x_i - 3x_i^2) \leq 1$. Thus, we can apply Theorem \ref{thm-bounds-general} with $\bar{\delta}_i=1$ and $\underline{\delta_i}=-1/3$ and get that:
\begin{align*}
    x_i &\leq \frac{k+1-i}{k+1} + \gamma \cdot \frac{k+1-i}{k+1} \sum_{j=1}^{i-1} j + \gamma \cdot \frac{i}{k+1} \sum_{j=1}^{k+1-i} j \\
    &= \frac{k+1-i}{k+1} + \gamma \cdot \frac{k+1-i}{k+1} \cdot\frac{(i-1)(i)}{2} + \gamma \cdot \frac{(2+k-i)(k+1-i)}{2} \\
    &= \frac{k+1-i}{k+1} + \gamma \frac{k+1-i}{k+1} \left( \frac{(i-1)(i)}{2} + \frac{i(2+k-i)}{2} \right) \\
     &= \frac{k+1-i}{k+1} + \gamma \frac{k+1-i}{k+1} \cdot \frac{i(k+1)}{2} \\
     &= \frac{k+1-i}{k+1} + \gamma \frac{i(k+1-i)}{2}. 
\end{align*} 
The proof for the lower bound follows identically.
\end{proof}
By observing that $\frac{i(k+1-i)}{2} \leq \frac{(k+1)^2}{8}$, we obtain this crude bound on the $x_i$'s: 
\begin{align*}
 \frac{k+1-i}{k+1} -\frac{(k+1)^2}{8} \cdot \frac{\gamma}{3} \leq   x_i \leq \frac{k+1-i}{k+1} + \frac{(k+1)^2}{8} \gamma.
\end{align*}
These bounds imply the following convergence:
\begin{corollary} \label{cor-conv-fixed-k}
For a fixed $k$, as $\gamma$ goes to $0$ then $x_i$ converges to $\frac{k+1-i}{k+1}$.
\end{corollary}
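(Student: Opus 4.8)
The plan is to obtain the statement as an immediate squeeze/sandwich consequence of the two-sided bounds already established in Theorem~\ref{thm-bounds}. Fix $k$ and an index $1 \le i \le k$, and let $x_i = x_i(\gamma)$ denote the corresponding coordinate of the optimal $\gamma$-biased portfolio, i.e.\ the solution of the first-order conditions~(\ref{eq-der-0}). Theorem~\ref{thm-bounds} gives, for every $\gamma > 0$,
\begin{equation*}
\frac{k+1-i}{k+1} - \frac{\gamma}{3}\cdot\frac{i(k+1-i)}{2} \;\le\; x_i(\gamma) \;\le\; \frac{k+1-i}{k+1} + \gamma\cdot\frac{i(k+1-i)}{2},
\end{equation*}
and since $k$ is fixed, the coefficient $\frac{i(k+1-i)}{2}$ (bounded crudely by $\frac{(k+1)^2}{8}$) is a constant independent of $\gamma$.

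The only remaining step is to let $\gamma \to 0^+$: both the lower bound $\frac{k+1-i}{k+1} - \Theta_k(\gamma)$ and the upper bound $\frac{k+1-i}{k+1} + \Theta_k(\gamma)$ converge to $\frac{k+1-i}{k+1}$, so by the squeeze theorem $x_i(\gamma) \to \frac{k+1-i}{k+1}$. One can even read off an explicit rate: $\bigl|x_i(\gamma) - \tfrac{k+1-i}{k+1}\bigr| \le \gamma\,\tfrac{(k+1)^2}{8}$, so the convergence is $O_k(\gamma)$. Note this limit is exactly the optimal \emph{rational} portfolio coordinate from Proposition~\ref{prop:spacing}, which is the point of the corollary.

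There is essentially no substantive obstacle here; the work has all been front-loaded into Lemma~\ref{lem-xi-bounds} and Theorems~\ref{thm-bounds-general}--\ref{thm-bounds}. The only minor point worth a sentence is that $x_i(\gamma)$ should be understood as the (any) solution of the optimality conditions~(\ref{eq-der-0}) to which Theorem~\ref{thm-bounds} applies, so that the bounds are genuinely uniform in the choice of optimal portfolio; once that is noted, the argument is a one-line application of the squeeze theorem and no further calculation is needed.
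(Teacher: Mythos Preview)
Your proof is correct and is exactly the approach the paper takes: the corollary is stated immediately after the two-sided bound from Theorem~\ref{thm-bounds} (sharpened via $\frac{i(k+1-i)}{2}\le\frac{(k+1)^2}{8}$), and the convergence is read off by sandwiching. The paper does not even write out the squeeze step explicitly, so your version is, if anything, slightly more detailed.
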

Interestingly, as we will see in Section \ref{sec:mon}, this convergence is not monotone for values of $i$ such that $\frac{k+1-i}{k+1} > \frac{2}{3}$.

\section{Fear of rejection can also cause overshooting} \label{sec:mon}

% \etcomment{Previous title: A surprising phenomenon: Non-monotonicity with respect to $\gamma$} \ercomment{thoughts on removing the parenthetical (choices not monotone in $\gamma$) (so that the title fits on one line)?} \etcomment{great idea}

Intuitively, one might expect that as $\gamma$ increases, the biased student becomes increasingly sensitive to the fear of rejection and undershoots the optimal portfolio more and more. However, we find that this is \emph{not} always the case. In this section, we show that there exist instances where, although the biased portfolio does indeed converge to the optimal portfolio as $\gamma$ shrinks to $0$, this convergence is not monotone -- decreasing $\gamma$ first causes an increase in competitiveness {of some choices}, then a decrease. Perhaps even more surprisingly, in some cases the biased agent actually applies to a school that is \emph{higher} than optimal -- indicating that a biased student may underperform not only by undershooting the optimal portfolio, but also sometimes by \emph{overshooting}. {Such a student may be reacting to fear of rejection, by applying to schools where she is very unlikely to be accepted, and hence with small expectation, limiting her disappointment when rejected.}

\subsection{Local overshooting}

The following proposition illustrates some of the required conditions for a biased student to apply higher than an unbiased student.

\begin{proposition} \label{prop-local-over}
Let $\gamma<2-\sqrt{3}$. Consider a biased agent that applies to schools $a$ and $b$ and doesn't apply to any school in between. Let $x$ be the optimal school for this biased agent to apply to in $(a,b)$. Then, $x>\frac{a+b}{2}$ if $a+b>4/3$ and $x < \frac{a+b}{2}$ if $a+b\leq 4/3${, while an unbiased agent applies to $x=\frac{a+b}{2}$}.
\end{proposition}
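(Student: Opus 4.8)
The plan is to reduce the proposition to elementary single‑variable calculus on the ``marginal utility'' of inserting a school into the empty gap $(a,b)$.

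\textbf{Step 1: the marginal‑utility function.} Since $a$ and $b$ are consecutive in the portfolio with $a<b$ (so $b$ is the more selective neighbour), inserting a school $x\in(a,b)$ changes the perceived utility only through three terms: the new school $x$ contributes consumption utility $x(b-x)$ and bias $-\gamma x^2(1-x)$, while $a$'s consumption term changes from $a(b-a)$ to $a(x-a)$. Collecting these, the gain from adding $x$ is
\[
f(x)\;=\;x(b-x)+a(x-b)-\gamma x^2(1-x)\;=\;(b-x)(x-a)-\gamma x^2(1-x).
\]
For $\gamma=0$ this is the downward parabola $(b-x)(x-a)$, whose maximiser is $\tfrac{a+b}{2}$ — exactly the unbiased agent's choice. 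The optimal school $x$ for the biased agent is the maximiser of $f$ over $(a,b)$, so the whole proposition is about locating $\arg\max_{(a,b)} f$ relative to $\tfrac{a+b}{2}$.

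\textbf{Step 2: concavity and a genuine interior maximiser.} Differentiating, $f'(x)=3\gamma x^2-2(1+\gamma)x+(a+b)$ and $f''(x)=6\gamma x-2(1+\gamma)$. Since $\gamma<2-\sqrt3<\tfrac12$, on $[0,1]$ we have $f''(x)\le 4\gamma-2<0$, so $f$ is strictly concave there and $f'$ is strictly decreasing on $(0,1)$. Moreover $a+b\le 2$, and $\gamma<2-\sqrt3$ is equivalent to $\gamma^2-4\gamma+1>0$, so the discriminant $4(1+\gamma)^2-12\gamma(a+b)\ge 4\big((1+\gamma)^2-6\gamma\big)=4(\gamma^2-4\gamma+1)>0$; hence $f$ has a genuine local maximum, which (in the regime where the agent actually applies somewhere strictly between $a$ and $b$) is the point $x$ in question, so $f'(x)=0$.

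\textbf{Step 3: the sign of $f'$ at the midpoint.} Write $m=\tfrac{a+b}{2}$, so $a+b=2m$. Then
\[
f'(m)=3\gamma m^2-2(1+\gamma)m+2m=3\gamma m^2-2\gamma m=\gamma\,m\,(3m-2)=\frac{\gamma(a+b)\big(3(a+b)-4\big)}{4}.
\]
As $\gamma>0$ and $m>0$, the sign of $f'(m)$ equals the sign of $3(a+b)-4$. Combining this with ``$f'$ strictly decreasing on $(0,1)$'' and ``$f'(x)=0$'': if $a+b>\tfrac43$ then $f'(m)>0=f'(x)$, forcing $m<x$; if $a+b<\tfrac43$ then $f'(m)<0=f'(x)$, forcing $m>x$; and if $a+b=\tfrac43$ then $f'(m)=0=f'(x)$, so $x=m$ by uniqueness of the critical point. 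This is precisely the claimed dichotomy (with the boundary case $a+b=\tfrac43$ giving $x=m$).

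\textbf{Main obstacle.} The delicate point is not the algebra but justifying that the biased optimum $x$ is an \emph{interior} critical point of $f$, so that $f'(x)=0$ may be invoked; this is where $\gamma<2-\sqrt3$ (guaranteeing the cubic $f$ has real critical points for every admissible $a+b\le2$) and the standing assumption that the agent actually applies within $(a,b)$ come in. Everything else is routine. Conceptually, the threshold $a+b=\tfrac43$, i.e. midpoint $=\tfrac23$, is the same $\tfrac23$ threshold as in Observation~\ref{obs:Deltagaps}: above $\tfrac23$ the bias cost $\gamma x^2(1-x)$ is \emph{decreasing} in $x$, so shading the application upward lowers it (overshooting), whereas below $\tfrac23$ it is increasing, so shading downward is preferred (undershooting).
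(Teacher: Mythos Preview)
Your proof is correct and cleaner than the paper's. Both arguments begin from the same first‑order condition $3\gamma x^2-2(1+\gamma)x+(a+b)=0$ (your $f'(x)=0$ is exactly the paper's rearrangement of equation~(\ref{eq-der-0}) with $z=a+b$), but the paper proceeds by explicitly solving the quadratic for $x$ via the quadratic formula and then algebraically manipulating the inequality $x\le z/2$ until it collapses to $z\le 4/3$. You sidestep that computation entirely: once you know $f'$ is strictly decreasing on $[0,1]$ (from $\gamma<1/2$), the position of the critical point relative to the midpoint is determined solely by the \emph{sign} of $f'(m)=\gamma m(3m-2)$, which is immediate. This is more transparent and, as you note, makes the $2/3$ threshold visibly the same one as in Observation~\ref{obs:Deltagaps}. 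The paper's explicit formula does buy something the monotonicity argument does not: it is reused verbatim in the next proposition to find, for general $\theta$, the value $z(\theta)$ at which $x=\theta(a+b)$, whereas your sign argument only handles $\theta=1/2$. Your discussion of the role of $\gamma<2-\sqrt3$ (real discriminant) and of the implicit interiority assumption is on point; the paper's proof glosses over the interiority issue as well and simply invokes the first‑order condition.
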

% \etcomment{you appear to assume in the proof that the optimal school $x$ exists. If $a+b$ is high, they may not. it also comes out in the proof, as $(1+\gamma)^2-3\gamma z$ may be negative.  } \socomment{I guess I meant that $a$ and $b$ are also schools that the student applies to as part as an optimal portfolio and because of this reason it should be the case that $(1+\gamma)^2-3\gamma z$ is not negative. Another option is just to have this proposition for $\gamma< 2-\sqrt 3$.  }\etcomment{I added inside the proof that $z\le 2-\gamma$ by Proposition \ref{prop-x2-bounded}. you took it out? why?}
\begin{proof}
Let $z=a+b$. By Equation (\ref{eq-der-0}) we have that:
$2x-z+\gamma(2x-3x^2)=0$. If $\gamma=0$ then we get $x=\frac{z}{2}=\frac{a+b}{2}$, extending the equal spacing of  Proposition \ref{prop:spacing}. If $\gamma>0$ this is a quadratic equation for $x$, and $x=\frac{z}{2}$ is not a solution. {For $\gamma<2-\sqrt{3}$}
%\etedit{For $\gamma<1/2$, we know from Proposition \ref{prop-x2-bounded} that $z<1-\gamma$,} so 
we have that:
\begin{align*}
    x= \frac{1+\gamma - \sqrt{(1+\gamma)^2-3\gamma  z}}{3\gamma}
\end{align*}
We would like to find the condition under which $x\leq \frac{z}{2}$
\begin{align*}
    x= \frac{1+\gamma - \sqrt{(1+\gamma)^2-3\gamma  z}}{3\gamma} \leq \frac{z}{2} &\implies 1+\gamma - \sqrt{(1+\gamma)^2-3\gamma  z} \leq \frac{3\gamma z}{2} \\
    &\implies 1+\gamma -\frac{3\gamma z}{2} \leq  \sqrt{(1+\gamma)^2-3\gamma  z} 
\end{align*}
% By rearranging we get:
% \begin{align*}
% 1+\gamma - \sqrt{(1+\gamma)^2-3\gamma  z} \leq \frac{3\gamma z}{2}
% \end{align*}
% This implies that:
% \begin{align*}
% 1+\gamma -\frac{3\gamma z}{2} \leq  \sqrt{(1+\gamma)^2-3\gamma  z} 
% \end{align*}
Note that since $z<2$ and $\gamma<1/2$ both sides of the inequality are positive  and hence it implies that:
\begin{align*}
(1+\gamma -\frac{3\gamma z}{2})^2 \leq  (1+\gamma)^2-3\gamma  z
\end{align*}
By rearranging we get that:
\begin{align*}
    (1+\gamma)^2-2(1+\gamma)\frac{3\gamma z}{2} + \frac{9\gamma^2 z^2}{4} \leq (1+\gamma)^2-3\gamma  z 
&\implies -2(1+\gamma)\frac{3}{2} + \frac{9\gamma z}{4} \leq -3 \\
%&\implies  \frac{9\gamma z}{4} \leq 3\gamma \\
&\implies z \leq 4/3
\end{align*}
This concludes the proof. 
\end{proof}
The above proof also holds for $\gamma<1/2$ if we take into account that by Proposition \ref{prop-x2-bounded} for any two schools $a$ and $b$ that are part of an optimal portfolio we have that $z<2-\gamma$.
Furthermore, the previous proof can be extended to achieve a more general result:
% \etdelete{\begin{proposition}
% Let $\gamma<2-\sqrt{3}$. Consider a biased agent that applies to schools $a$ and $b$ and doesn't apply to any school in between. Let $x$ be the optimal school for this biased agent to apply to in $(a,b)$. Then, $$\frac{1}{2(1+\gamma)}(a+b) \leq x \leq (a+b) \cdot \frac{1+
% \gamma - \sqrt{(1+\gamma)^2-6\gamma}}{6\gamma}$$
% \end{proposition}
% }
\begin{proposition}
Let $\gamma<2-\sqrt{3}$. Consider a biased agent that applies to schools $a$ and $b$ and doesn't apply to any school in between. Let $x$ be the optimal school for this biased agent to apply to in $(a,b)$. Then for every $$\frac{1}{2(1+\gamma)} < \theta < \frac{1+
\gamma - \sqrt{(1+\gamma)^2-6\gamma}}{6\gamma}$$ 
there exists $z(\theta)$ such that if $a+b>z(\theta)$ then $x > \theta (a+b)$ and else $x < \theta (a+b)$. 
\end{proposition}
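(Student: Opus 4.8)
The plan is to run the same argument as in Proposition~\ref{prop-local-over}, but carrying $\theta$ as a free parameter instead of fixing $\theta=\tfrac12$. Put $z=a+b$. The first-order condition (\ref{eq-der-0}) applied to the single school $x$ placed between its neighbours $a$ and $b$ reads $2x-z+\gamma(2x-3x^2)=0$, i.e. $3\gamma x^2-(2+2\gamma)x+z=0$, whose relevant root (the one lying in $[0,1]$, exactly as in the earlier proof) is
\[
x=\frac{1+\gamma-\sqrt{(1+\gamma)^2-3\gamma z}}{3\gamma}.
\]
The first thing to verify is that the radical is real: since $a,b\in[0,1]$ we always have $z<2$, so $(1+\gamma)^2-3\gamma z>(1+\gamma)^2-6\gamma=\gamma^2-4\gamma+1$, and the hypothesis $\gamma<2-\sqrt3$ makes $\gamma^2-4\gamma+1>0$. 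This is also why $2-\sqrt3$ is the natural cutoff, and why the upper endpoint of the allowed $\theta$-interval is well defined.

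Next I would turn $x\le\theta z$ into something tractable. Clearing the positive factor $3\gamma$ and isolating the radical, $x\le\theta z$ is equivalent to
\[
1+\gamma-3\gamma\theta z\le\sqrt{(1+\gamma)^2-3\gamma z}.
\]
To square both sides I need the left-hand side nonnegative; it is positive whenever $z\le\tfrac{1+\gamma}{3\gamma\theta}$, and since the assumed bound $\theta<\frac{1+\gamma-\sqrt{(1+\gamma)^2-6\gamma}}{6\gamma}<\frac{1+\gamma}{6\gamma}$ gives $\tfrac{1+\gamma}{3\gamma\theta}>2>z$, the left side is in fact positive throughout our regime. Squaring, the $(1+\gamma)^2$ terms cancel and a factor $3\gamma z>0$ divides out, leaving
\[
z\le\frac{2\theta(1+\gamma)-1}{3\gamma\theta^2}\eqqcolon z(\theta).
\]
All of these manipulations are reversible, so $x\le\theta z\iff z\le z(\theta)$, equivalently $x>\theta(a+b)\iff a+b>z(\theta)$, which is exactly the asserted dichotomy.

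The final point is to confirm $z(\theta)$ is a genuine threshold, i.e. $0<z(\theta)<2$. The numerator $2\theta(1+\gamma)-1$ is positive precisely when $\theta>\frac{1}{2(1+\gamma)}$, the lower hypothesis on $\theta$; and $z(\theta)<2$ rearranges to $6\gamma\theta^2-2(1+\gamma)\theta+1>0$, whose smaller root is exactly $\frac{1+\gamma-\sqrt{(1+\gamma)^2-6\gamma}}{6\gamma}$, so the upper hypothesis on $\theta$ is precisely what is needed. Thus the two bounds on $\theta$ are not arbitrary --- they are exactly the conditions for $z(\theta)\in(0,2)$. The only place where one must be careful is the sign bookkeeping that justifies squaring, and that is handled for free by the upper bound on $\theta$; everything else is routine algebra, so I expect no serious obstacle beyond presenting the equivalences cleanly.
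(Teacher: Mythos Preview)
Your proposal is correct and follows essentially the same route as the paper: both carry $\theta$ through the computation of Proposition~\ref{prop-local-over}, arrive at the threshold $z(\theta)=\frac{2\theta(1+\gamma)-1}{3\gamma\theta^2}$ (the paper writes it as $\frac{6\theta(1+\gamma)-3}{9\theta^2\gamma}$), and then read off the $\theta$-interval from the requirement $0<z(\theta)<2$. Your version is slightly more careful in explicitly checking reality of the radical and positivity of the left side before squaring, but the argument is the same.
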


\begin{proof}
%  \etdelete{   Another way of phrasing this statement is that for every $\frac{1}{2(1+\gamma)} < \theta < \frac{1+
% \gamma - \sqrt{(1+\gamma)^2-6\gamma}}{6\gamma}$, we have that there exists $z(\theta)$ such that if $a+b>z(\theta)$ then $x > \theta (a+b)$ and else $x < \theta (a+b)$.} \etcomment{interesting statement, but this is not what the proposition says. I proposed to move this "Another way" to be the proposition.}
This bounds are illustrated in Figure \ref{fig:theta-bounds}.
\begin{figure}
    \centering
    \includegraphics[width=0.5\linewidth]{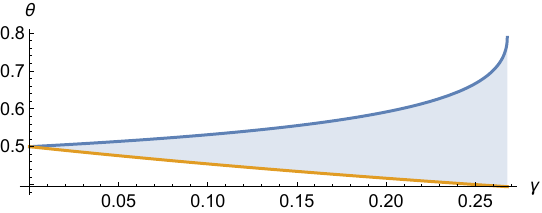}
    \caption{Illustration of $\frac{1}{2(1+\gamma)} < \theta < \frac{1+
\gamma - \sqrt{(1+\gamma)^2-6\gamma}}{6\gamma}$}
    \label{fig:theta-bounds}
\end{figure}
The proof is the same as the proof of Proposition \ref{prop-local-over} except that we solve for:
\begin{align*}
    x= \frac{1+\gamma - \sqrt{(1+\gamma)^2-3\gamma  z}}{3\gamma} \leq \theta z.
\end{align*}
By the same steps as the previous proof this gets us that $x \le \theta z$ if and only if $z$ satisfies %\begin{align*}
    $z \leq \frac{6\theta(1+\gamma)-3}{9\theta^2 \gamma} $.
%\end{align*}
To compute which values of $\theta$ are feasible, we observe $0 
\leq z \leq 2$. Since $z> 0$ we get:
\begin{align*}
    6 \theta(1+\gamma)-3>0 \implies \theta > \frac{1}{2(1+\gamma)}.
\end{align*}
Now requiring that $z < 2$ gets us that $ 6 \theta(1+\gamma)-3 < 9\theta^2 \gamma$. After rearranging we get $6 \gamma \theta^2 - 2\theta (1+\gamma) +1 >0.$
% \begin{align*}
%  6 \theta(1+\gamma)-3 < 9\theta^2 \gamma
% \end{align*}
% After rearranging we get that:
% \begin{align*}
% 6 \gamma \theta^2 - 2\theta (1+\gamma) +1 >0
% \end{align*}
we conclude that $\theta < \frac{1+
\gamma - \sqrt{(1+\gamma)^2-6\gamma}}{6\gamma}$ since $\frac{1+
\gamma + \sqrt{(1+\gamma)^2-6\gamma}}{6\gamma} >1$.
\end{proof}
 
%\socomment{The upper bound can be improved since we know that $x_2<1-\gamma$ which 
%implies that $a+b < 2-\gamma$.}

\subsection{Global overshooting}
In the following theorem, we strengthen the bounds on the $x_i$'s we proved in Theorem \ref{thm-bounds} to show that there are portfolio sizes for which there exists $\gamma$ such that the biased student simultaneously overshoots at the top school she applies to \emph{and} undershoots in the minimal school she applies to. We conjecture that this is a part of a more general phenomenon in which for large enough $k$, there exists $\gamma'$ such that for any $\gamma<\gamma'$ the student overshoots (essentially) all schools $i$ such that $\frac{k+i-1}{k+1}>2/3$ and undershoots (essentially) all schools such that $\frac{k+i-1}{k+1}<2/3$. 

\begin{figure}[ht] 
    \centering
    \begin{subfigure}[b]{0.45\textwidth}
        \includegraphics[width=\linewidth]{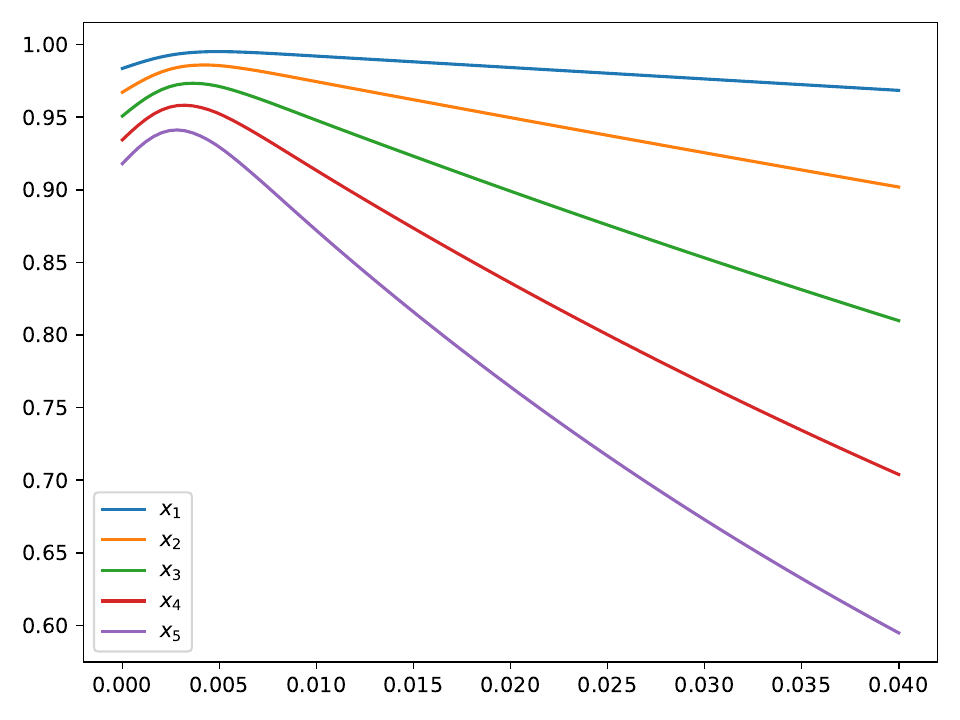}
        \caption{Top 5 schools}
        \label{fig:k-1-5-Over shooting}
    \end{subfigure}
    \hfill % This adds a horizontal space between the subfigures if needed
    \begin{subfigure}[b]{0.45\textwidth}
        \includegraphics[width=\linewidth]{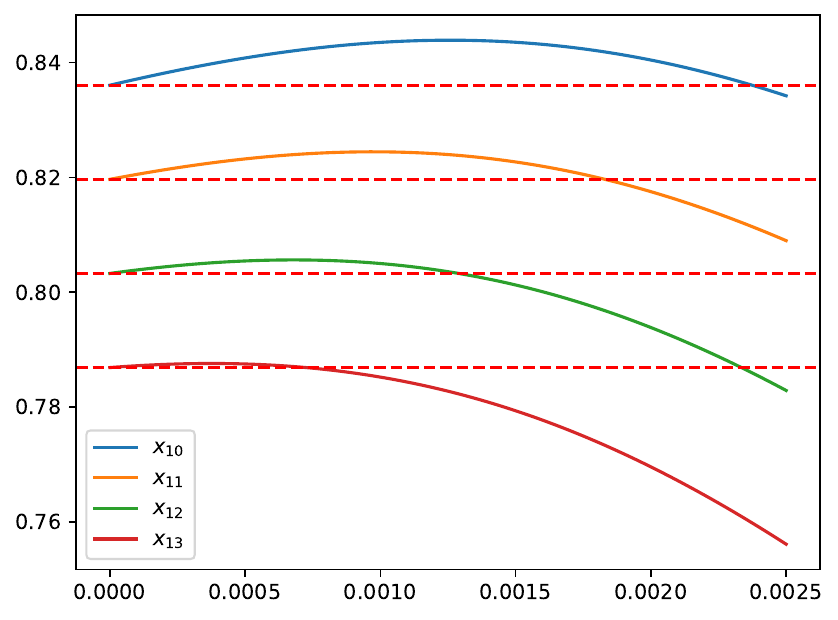}
        \caption{Schools 10-13}
        \label{fig:gamma-k-60-10-13}
    \end{subfigure}
    \caption{Overshooting top schools as a function of $\gamma$ for a portfolio of $k=60$ schools.}
    \label{fig-top-overshoot}
\end{figure}

\begin{theorem}
There exist values of $k$ for which there exists $\gamma>0$ such that for any $\gamma'<\gamma$ we have that $x_1>\frac{k}{k+1}$ and $x_k<\frac{1}{k+1}$.
\end{theorem}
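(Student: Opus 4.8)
The plan is to bootstrap the crude first-order estimate of Theorem~\ref{thm-bounds} by feeding it back into the exact recursion captured by Theorem~\ref{thm-bounds-general}. Fix $k$ and abbreviate the rational position $r_i = \frac{k+1-i}{k+1}$. Theorem~\ref{thm-bounds} gives $|x_i - r_i| \le \gamma\,\frac{i(k+1-i)}{2} \le \gamma\,\frac{(k+1)^2}{8}$ for every $i$ and every $\gamma>0$. Combining this with the identity $-(2x-3x^2) = -(2r-3r^2) + (x-r)\bigl(3(x+r)-2\bigr)$ and the bound $|3(x+r)-2|\le 4$ valid on $[0,1]^2$, I would conclude that $-(2x_i-3x_i^2) = -(2r_i-3r_i^2) + e_i$ with $|e_i| \le D\gamma$, where $D = \tfrac{(k+1)^2}{2}$. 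Hence $\bar\delta_i = -(2r_i - 3r_i^2) + D\gamma$ and $\underline\delta_i = -(2r_i-3r_i^2) - D\gamma$ are admissible choices in Theorem~\ref{thm-bounds-general} for all $\gamma>0$.

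Next I would apply Theorem~\ref{thm-bounds-general} at the two extreme indices, using the lower bound at $i=1$ and the upper bound at $i=k$. At $i=1$ the first sum in the theorem is empty and $r_{k+1-j} = \tfrac{j}{k+1}$, so (using $-(2r-3r^2)=r(3r-2)$) the bound reads
\[
x_1 \ \ge\ \frac{k}{k+1} + \frac{\gamma}{k+1}\left(\sum_{j=1}^{k} j\, r_{k+1-j}\bigl(3 r_{k+1-j} - 2\bigr) \ -\ D\gamma\,\frac{k(k+1)}{2}\right).
\]
The main sum is a polynomial in $j$ that evaluates, via the standard closed forms for $\sum j^2$ and $\sum j^3$, to $\tfrac{k(k-4)}{12}$. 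At $i=k$ the two sums in the theorem combine into $\sum_{j=1}^k j\,\bar\delta_j$, giving
\[
x_k \ \le\ \frac{1}{k+1} + \frac{\gamma}{k+1}\left(\sum_{j=1}^{k} j\, r_{j}\bigl(3 r_{j} - 2\bigr) \ +\ D\gamma\,\frac{k(k+1)}{2}\right),
\]
and here the main sum evaluates to $-\tfrac{k(k+2)}{12}$.

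Finally, pick any $k\ge 5$, so that $\tfrac{k(k-4)}{12}>0$; note $-\tfrac{k(k+2)}{12}<0$ automatically for all $k\ge 1$. In each display the leading contribution is a nonzero constant (positive, resp.\ negative) times $\gamma$, while the correction term is $O_k(\gamma^2)$. Therefore there is an explicit threshold $\gamma^\star(k)>0$ — on the order of $\tfrac{k-4}{3(k+1)^3}$ — below which the linear term dominates, so that $x_1 > \tfrac{k}{k+1}$ and $x_k < \tfrac{1}{k+1}$ hold simultaneously for every $\gamma'$ with $0<\gamma'<\gamma^\star(k)$. Taking $k=5$ (or any larger value) proves the statement, which should be read with $\gamma'>0$ since $\gamma'=0$ returns the rational portfolio with equalities.

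The two polynomial-sum evaluations are routine; the only real care needed is that the $O_k(\gamma)$ error in replacing $-(2x_i-3x_i^2)$ by $-(2r_i-3r_i^2)$ is controlled \emph{uniformly in $i$} — which is exactly what the quantitative form of Theorem~\ref{thm-bounds} supplies — and that summing $k$ such errors (each a fixed multiple of $\gamma$) keeps the aggregate correction at order $\gamma^2$ relative to the $\Theta_k(\gamma)$ main terms, so it cannot flip the sign once $\gamma$ is small. I would also verify that the (smaller) threshold coming from the $x_1$ inequality is the binding one, which holds since $k-4<k+2$.
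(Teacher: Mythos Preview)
Your argument is correct and, in fact, sharper than the paper's own proof. Both proofs share the same high-level strategy: feed the crude estimate of Theorem~\ref{thm-bounds} back into Theorem~\ref{thm-bounds-general} to extract the second-order correction. The difference lies in how the nonlinearity $-(2x_i-3x_i^2)$ is controlled. The paper restricts $\gamma$ so that each $x_i$ stays on one side of the critical point $1/3$, then uses monotonicity of $f$ on each resulting interval to set $\bar\delta_i,\underline\delta_i$ at the interval endpoints; this leads to a case split and a final numerical verification only for $k=5$. You instead linearize $f$ at the rational point $r_i$, bound the remainder uniformly by $D\gamma$ via the Lipschitz-type estimate $|3(x+r)-2|\le 4$, and thereby reduce the two key sums to closed forms $\tfrac{k(k-4)}{12}$ and $-\tfrac{k(k+2)}{12}$. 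This buys you the result for \emph{every} $k\ge 5$ with an explicit threshold $\gamma^\star(k)\sim \tfrac{k-4}{3(k+1)^3}$, which is strictly stronger than the existence claim the paper establishes. Your closed-form evaluations check out, and your observation that the $x_1$ constraint is the binding one (since $k-4<k+2$) is correct.
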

\begin{proof}
Theorem \ref{thm-bounds} establishes that:
\begin{align*}
 \frac{k+1-i}{k+1} - \frac{\gamma}{3} \cdot \frac{i(k+1-i)}{2} \leq   x_i \leq \frac{k+1-i}{k+1} + \gamma \cdot \frac{i(k+1-i)}{2}. \tag{$*$}
\end{align*}
When $\gamma$ is sufficiently small we can use these bounds on the $x_i's$ to apply Theorem \ref{thm-bounds-general} with tighter bounds on $f(x_i) = -(2x_i-3x_i^2)$. To get the right bounds we should carefully consider where the function $f(x_i) = -(2x_i-3x_i^2)$ attains its minimum and maximum in an interval $[a,b]$:
\begin{enumerate}
\item If $a\geq \frac{1}{3}$, then $f(x_i)$ is increasing in the interval and hence the minimum is attained at $a$ and the maximum at $b.$
\item If $b \leq \frac{1}{3}$, then $f(x_i)$ is decreasing in the interval and hence the minimum is attained at $b$ and the maximum at $a.$
\item If $a < \frac{1}{3} < b$, then the minimum is attained at $\frac{1}{3}$ and the maximum is attained at $a$ if $a+b> \frac{2}{3}$ and at $b$ otherwise.
\end{enumerate}
To begin this proof we compute a bound on $\gamma$ such for every $x_i$ the interval defined by $(*)$ is sufficiently small and for any $x_i$ such that $\frac{k+1-i}{k+1} \neq 1/3$, the interval $x_i$ is in is either of type $1$ or of type $2$. For this reason, it is simpler to assume that $k+1$ is divisible by $3$. This implies that if we choose $\gamma$ such that $\gamma \cdot \frac{i(k+1-i)}{2} \leq \frac{1}{k+1}$ we are guaranteed that the only $x_i$ which is in an interval of type $3$ is $x_{\frac{2(k+1)}{3}}$. By observing that $\frac{i(k+1-i)}{2} > \frac{(k+1)^2}{8}$, we get that to satisfy this constraint we can pick any $\gamma \leq \frac{8}{(k+1)^3}$. 

Next, we assume that $\gamma \leq \frac{8}{(k+1)^3}$ and apply Theorem \ref{thm-bounds-general} with $\bar{\delta}_i(\gamma)$ and $\underline{\delta}_i(\gamma)$ defined as follows:
\begin{itemize}
\item For $i$ such that $\frac{k+1-i}{k+1} < 1/3$, we set $\bar{\delta}_i(\gamma)=f( \frac{k+1-i}{k+1} - \frac{\gamma}{3} \cdot \frac{i(k+1-i)}{2})$ and $\underline{\delta}_i(\gamma)= f(\frac{k+1-i}{k+1} + \gamma \cdot \frac{i(k+1-i)}{2})$
\item For $i$ such that $\frac{k+1-i}{k+1} > 1/3$, we set $\bar{\delta}_i(\gamma)= f(\frac{k+1-i}{k+1} + \gamma \cdot \frac{i(k+1-i)}{2})$ and $\underline{\delta}_i(\gamma)= f( \frac{k+1-i}{k+1} - \frac{\gamma}{3} \cdot \frac{i(k+1-i)}{2})$.
\item  For $i$ such that $\frac{k+1-i}{k+1} = 1/3$, we set $\bar{\delta}_i(\gamma)=f( \frac{k+1-i}{k+1} - \frac{\gamma}{3} \cdot \frac{i(k+1-i)}{2})$ and $\underline{\delta}_i(\gamma)=1/3$.
\end{itemize}
By applying Theorem \ref{thm-bounds-general} for $x_1$ using the $\underline{\delta}_i(\gamma)$'s we get:
\begin{align*}
 x_1  &\geq \frac{k}{k+1} + \gamma \cdot \frac{1}{k+1} \sum_{j=1}^{k} j \cdot \underline{\delta}_{k+1-j} \\
    &\geq \frac{k}{k+1} +\gamma \cdot \frac{1}{k+1} \sum_{\{j| \frac{k+1-j}{k+1}<\frac{1}{3}\}} (k+1-j) \cdot f(\frac{k+1-j}{k+1} + \gamma \cdot \frac{j(k+1-j)}{2}) \\ 
    &~~~+ \gamma \cdot \frac{1}{k+1} \frac{(k+1)}{3} \cdot f(1/3) + \gamma \cdot \frac{1}{k+1}\sum_{\{j| \frac{k+1-j}{k+1}>\frac{1}{3}\}} (k+1-j) \cdot f( \frac{k+1-j}{k+1} - \frac{\gamma}{3} \cdot \frac{j(k+1-j)}{2})
 \end{align*}
To show that there exists a threshold $\gamma'$ such that for every $\gamma < \gamma'$ we have that $x_1 > \frac{k-1}{k}$, we need to show that there exists  $\gamma'$ such that for every $\gamma < \gamma'$:
\begin{align*}
&\sum_{\{j| \frac{k+1-j}{k+1}<\frac{1}{3}\}} (k+1-j) \cdot f(\frac{k+1-j}{k+1} + \gamma \cdot \frac{j(k+1-j)}{2}) \\ 
    &~~~+ \frac{(k+1)}{3}\cdot f(1/3) + \sum_{\{j| \frac{k+1-j}{k+1}>\frac{1}{3}\}} (k+1-j) \cdot f( \frac{k+1-j}{k+1} - \frac{\gamma}{3} \cdot \frac{j(k+1-j)}{2}) >0.
\end{align*}
Consider this expression for $k=5$. We need to show that:
\begin{align*}
&f(1/6+5/2 \gamma)+2f(1/3)+3f(1/2-3/2 \gamma)+ 4f(2/3-4/3 \gamma)+5f(5/6-5/6 \gamma ) \\
&= 1/12 (5 - 362 \gamma + 849 \gamma^2) > 0 .
\end{align*}
The above expression is greater than $0$ for any $\gamma < \frac{1}{849} \left(181-2 \sqrt{7129}\right) = 0.0142912$. Recall that this argument is only valid for $\gamma \leq \frac{8}{(k+1)^3}$, hence, we get that for $\gamma \leq \min \{0.0142912, \frac{8}{6^3} \} = 0.0142912$ the top school that the biased student applies to will always be higher than $\frac{k}{k+1}$. 

In Figure \ref{fig:overshoot-k5} we plot our lower bound and $x_1$ as a function of $\gamma$. Next to it we plot $x_1$ to show that in fact the overshooting appears for substantially greater values of $\gamma$. To analytically improve our bound, we should repeatedly plug in bounds we have on the $x_i$'s to Theorem \ref{thm-bounds-general} to get increasingly finer bounds.

Note that similar techniques can also be used to obtain an upper bound on $x_k$. By Theorem \ref{thm-bounds-general}:
\begin{align*}
x_k &\leq \frac{1}{k+1} + \gamma \cdot \frac{1}{k+1} \sum_{j=1}^{k} j \cdot \bar{\delta}_j.
\end{align*}
By applying Theorem \ref{thm-bounds-general} for $x_k$ using the $\bar{\delta}_i(\gamma)$'s we get:
\begin{align*}
x_k &\leq \frac{1}{k+1} +\gamma \cdot \frac{1}{k+1} \sum_{\{j| \frac{k+1-j}{k+1}\leq \frac{1}{3}\}} j \cdot f( \frac{k+1-j}{k+1} - \frac{\gamma}{3} \cdot \frac{j(k+1-j)}{2}) \\ 
    &~~~+ \gamma \cdot \frac{1}{k+1}\sum_{\{j| \frac{k+1-j}{k+1}>\frac{1}{3}\}} j \cdot f(\frac{k+1-j}{k+1} + \gamma \cdot \frac{j(k+1-j)}{2})
\end{align*}
To show that $x_k < \frac{1}{k+1}$, we need to show that:
\begin{align*}
 \sum_{\{j| \frac{k+1-j}{k+1}\leq \frac{1}{3}\}} j \cdot f( \frac{k+1-j}{k+1} - \frac{\gamma}{3} \cdot \frac{j(k+1-j)}{2})+\sum_{\{j| \frac{k+1-j}{k+1}>\frac{1}{3}\}} j \cdot f(\frac{k+1-j}{k+1} + \gamma \cdot \frac{j(k+1-j)}{2}) < 0
\end{align*}
For $k=5$, showing that $x_5 < \frac{1}{6}$ boils down to showing that:
\begin{align*}
&5f(1/6-5/6 \gamma)+4f(1/3-4/3 \gamma)+3f(1/2+9/2 \gamma)+ 2f(2/3+4 \gamma)+f(5/6+5/2 \gamma ) \\
&= 1/12 (-35 + 494 \gamma + 3945 \gamma^2) < 0
\end{align*}
This holds for $\gamma<\frac{2 \sqrt{49771}-247}{3945} =0.0504913.$ 
{Putting this together with the bound of $\gamma < \frac{1}{6^3}$, we required at the beginning of the argument, we get that his bound holds for $\gamma< \frac{1}{6^3}$.}
\end{proof}

\begin{figure}[ht] 
    \centering
    \begin{subfigure}[b]{0.48\textwidth}
        \includegraphics[width=\textwidth]{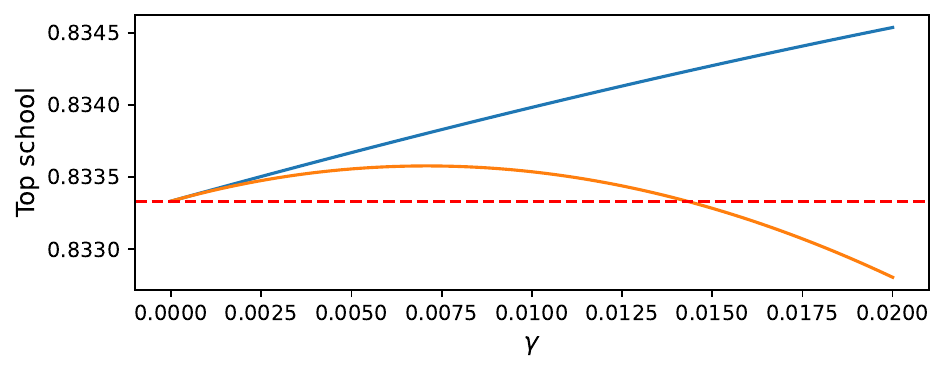}
        \caption{}
        \label{fig:gamma1k5lower}
    \end{subfigure}
    \hfill % This will insert a small space between the two subfigures
    \begin{subfigure}[b]{0.48\textwidth}
        \includegraphics[width=\textwidth]{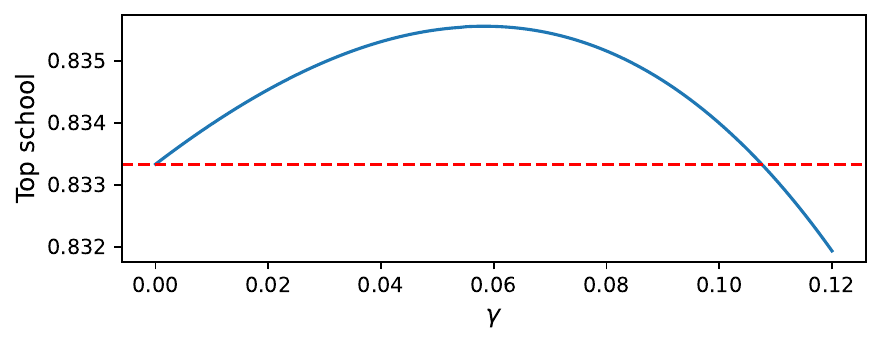}
        \caption{}
        \label{fig:gamma1k5}
    \end{subfigure}
    \caption{In both plots, the blue line is the value of $x_1$ in a $5$-school portfolio. The red dashed line at $5/6$ marks the top school an unbiased agent applies to. $(a)$ focuses on the range in which our analytical bound holds and plots the lower bound in orange. $(b)$ zooms out and shows the value of $x_1$ for a larger interval.}
    \label{fig:overshoot-k5}
\end{figure}

\section{Conclusion} \label{sec:conclusion}

In this paper we study a behavioral model based on reference points and loss aversion for the school application decisions made by students. 
We find that even a small amount of this bias leads to significant changes in how a student chooses where to apply, relative to the baseline of a student who does not experience these biases.
In particular, biased students in our model apply very sparsely to the most selective schools and devote an increasingly large fraction of their applications to less selective ones; but for small levels of bias, they also avoid the prospect of disappointing rejections by overshooting and apply to schools that are too selective.

There are a number of interesting directions for further work.  First, while we have used a standard and well-motivated model of the application process, with schools totally ordered by selectivity, it would be interesting to consider models where the outcomes at different schools are less strongly dependent, such as one finds in certain models of simultaneous search \cite{chade2006simultaneous}.
It would also be interesting to expand the functional form of the loss aversion used here; while manifesting the bias as an increased linear cost is standard, one could also ask how nonlinear costs might affect the outcome. 

% affects the school applications of students. We propose a simple model of this behavioral bias, and show how the student’s choices change in interesting and dramatic ways in a model where these reputation-based behavioral biases are taken into account. Our model is rich enough to cover a range of different ways that biased students cope with fear of rejection. We show that biased students apply very sparsely to highly selective schools, but at time overshoot rational students and apply to schools that are too selective. 

\section*{Acknowledgments}
We thank Kenny Peng and Nikhil Garg for useful discussions of modeling school choice and strategic application behavior. This work was supported in part by BSF grant 2018206, ISF grant 2167/19, the NSF Graduate Research Fellowship Program under Grant No. DGE-2139899, AFOSR grant FA9550-23-1-0410, AFOSR grant FA9550-231-0068, a Vannevar Bush Faculty Fellowship, a Simons Collaboration grant, and a grant from the MacArthur Foundation.

\bibliographystyle{alphaurl}
\bibliography{paperbib}

\begin{thebibliography}{GMM21}

\bibitem[ACH17]{artemov2017strategic}
Georgy Artemov, Yeon-Koo Che, and Yinghua He.
\newblock Strategic ‘mistakes’: Implications for market design research.
\newblock {\em NBER Working Paper}, 2017.

\bibitem[AS23]{as2023college}
S~Nageeb Ali and Ran~I Shorrer.
\newblock Hedging in the college application problem.
\newblock Technical report, Working Paper, 2023.
\newblock URL: \url{http://rshorrer.weebly.com/uploads/2/4/4/5/24450164/alishorrerweb.pdf}.

\bibitem[CS06]{chade2006simultaneous}
Hector Chade and Lones Smith.
\newblock Simultaneous search.
\newblock {\em Econometrica}, 74(5):1293--1307, 2006.

\bibitem[DHR22]{dreyfussEBRD}
Bnaya Dreyfuss, Ori Heffetz, and Matthew Rabin.
\newblock Expectations-based loss aversion may help explain seemingly dominated choices in strategy-proof mechanisms.
\newblock {\em American Economic Journal: Microeconomics}, 14(4):515--55, November 2022.
\newblock URL: \url{https://www.aeaweb.org/articles?id=10.1257/mic.20200259}, \href {https://doi.org/10.1257/mic.20200259} {\path{doi:10.1257/mic.20200259}}.

\bibitem[GMM21]{gimbert2021constrained}
Hugo Gimbert, Claire Mathieu, and Simon Mauras.
\newblock Constrained school choice with incomplete information, 2021.
\newblock \href {https://arxiv.org/abs/2109.09089} {\path{arXiv:2109.09089}}.

\bibitem[HK09]{haeringer2009constrained}
Guillaume Haeringer and Flip Klijn.
\newblock Constrained school choice.
\newblock {\em Journal of Economic theory}, 144(5):1921--1947, 2009.
\newblock \href {https://doi.org/10.1016/j.jet.2009.05.002} {\path{doi:10.1016/j.jet.2009.05.002}}.

\bibitem[HK21]{HKsurvey}
Rustamdjan Hakimov and Dorothea Kübler.
\newblock Experiments on centralized school choice and college admissions: a survey.
\newblock {\em Experimental Economics}, 24(2):434--488, 2021.
\newblock URL: \url{https://EconPapers.repec.org/RePEc:kap:expeco:v:24:y:2021:i:2:d:10.1007_s10683-020-09667-7}.

\bibitem[HRS21]{hassidim2021limits}
Avinatan Hassidim, Assaf Romm, and Ran~I Shorrer.
\newblock The limits of incentives in economic matching procedures.
\newblock {\em Management Science}, 67(2):951--963, 2021.

\bibitem[KR06]{KR06}
Botond Kőszegi and Matthew Rabin.
\newblock A model of reference-dependent preferences.
\newblock {\em The Quarterly Journal of Economics}, 121(4):1133--1165, 2006.
\newblock URL: \url{http://www.jstor.org/stable/25098823}.

\bibitem[KR07]{KR07}
Botond Kőszegi and Matthew Rabin.
\newblock Reference-dependent risk attitudes.
\newblock {\em American Economic Review}, 97(4):1047--1073, September 2007.
\newblock URL: \url{https://www.aeaweb.org/articles?id=10.1257/aer.97.4.1047}, \href {https://doi.org/10.1257/aer.97.4.1047} {\path{doi:10.1257/aer.97.4.1047}}.

\bibitem[KR09]{KR09}
Botond K\H{o}szegi and Matthew Rabin.
\newblock Reference-dependent consumption plans.
\newblock {\em American Economic Review}, 99(3):909--36, June 2009.
\newblock URL: \url{https://www.aeaweb.org/articles?id=10.1257/aer.99.3.909}, \href {https://doi.org/10.1257/aer.99.3.909} {\path{doi:10.1257/aer.99.3.909}}.

\bibitem[SS18]{shorrer2018obvious}
Ran~I Shorrer and S{\'a}ndor S{\'o}v{\'a}g{\'o}.
\newblock Obvious mistakes in a strategically simple college admissions environment: Causes and consequences.
\newblock {\em Available at SSRN 2993538}, 2018.

\end{thebibliography}

\appendix
\section{Missing Proofs from Section \ref{sec:approx}} \label{app-approx}
In this Section we prove:
\begin{lemma} 
For every $1 \leq i \leq k$, let $\bar \delta_i$ and $\underline \delta_i$ be such that: $\underline{\delta_i} \leq -(2x_i-3x_i^2) \leq \bar \delta_i$. The following holds:
\begin{align*}
\frac{k+1-i}{k+2-i} x_{i-1} + \frac{\gamma}{k+2-i} \sum_{j=1}^{k+1-i} j \cdot  \underline{\delta}_{k+1-j} \leq x_{i} \leq \frac{k+1-i}{k+2-i}  \cdot x_{i-1} + \frac{\gamma}{k+2-i} \sum_{j=1}^{k+1-i} j \cdot  \bar{\delta}_{k+1-j}.
\end{align*}
\end{lemma}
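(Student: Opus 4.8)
The plan is to solve the first-order optimality recurrence \eqref{eq-der-0} downward from the top school, treating the ``$\gamma(2x_i - 3x_i^2)$'' terms as bounded perturbations, and then run an induction on how many schools remain below the current index. Concretely, I would fix $i$ and look at the sub-portfolio $x_i > x_{i+1} > \dots > x_k$, which by the same first-order conditions behaves like an optimal portfolio on the interval $[0, x_i]$ with $x_i$ playing the role that $1 = x_0$ plays for the full portfolio. The recurrence \eqref{eq-der-0} rearranged reads $x_{j+1} - 2x_j + x_{j-1} = -\gamma(2x_j - 3x_j^2)$, a discrete second-difference (Poisson-type) equation whose homogeneous solutions are affine in the index; so the general solution on a block of $m = k+1-i$ schools below $x_{i-1}$ is an affine function pinned by the two ``boundary'' values $x_{i-1}$ at one end and $x_{k+1} = 0$ at the other, plus a particular solution built by summing the forcing terms against the discrete Green's function of the second-difference operator on that block.

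**The key steps, in order.** First, I would write the exact solution of the linear recurrence $x_{j+1} - 2x_j + x_{j-1} = c_j$ (with $c_j = -\gamma(2x_j - 3x_j^2)$) on the index range $j = i, \dots, k$ with the two boundary conditions $x_{i-1}$ known and $x_{k+1} = 0$. The discrete Green's function for the Dirichlet problem of the second-difference operator on $m+1$ points is the standard triangular kernel $G(a,b) \propto \min(a,b)\bigl(1 - \max(a,b)/(m+2)\bigr)$ up to normalization; evaluating the solution at the top index of the block gives exactly an expression of the shape $\frac{m}{m+1}x_{i-1} + \frac{\gamma}{m+1}\sum_{j=1}^{m} j\,\delta_{k+1-j}$ where $\delta$ stands for the forcing coefficient at the appropriate school. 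Second, since we don't have the exact $c_j$ (they depend on the unknown $x_j$), I would replace each $c_j = -(2x_j - 3x_j^2)$ by its lower bound $\underline{\delta}$ or upper bound $\bar\delta$; because the Green's-function weights $j/(k+2-i)$ are all nonnegative, this substitution preserves the inequality direction and yields the two-sided bound claimed in the lemma. Third, I would package this as an induction on $m = k+1-i$ (equivalently, downward induction on $i$ from $i = k$ to $i = 1$): the base case $i = k$ gives $x_k = \tfrac12 x_{k-1} + \tfrac{\gamma}{2}\cdot 1\cdot(-(2x_k-3x_k^2))$ directly from \eqref{eq-der-0} with $x_{k+1}=0$, and the inductive step plugs the bound for $x_{i+1}$ (in terms of $x_i$) back into \eqref{eq-der-0} written as $x_i = \tfrac12(x_{i-1} + x_{i+1}) + \tfrac{\gamma}{2}(2x_i - 3x_i^2)$ and re-collects the telescoped sums, checking that the coefficient of each $\delta_{k+1-j}$ comes out to $j/(k+2-i)$.

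**The main obstacle** I anticipate is the bookkeeping in the inductive step: after substituting the bound for $x_{i+1}$ into the recurrence for $x_i$, one gets a combination of a sum with $m-1=k-i$ terms (from $x_{i+1}$'s bound, with denominator $k+1-i$) and a fresh forcing term, and one must verify these recombine so that the coefficient of $\delta_{k+1-j}$ for $j = 1, \dots, k+1-i$ is uniformly $j/(k+2-i)$ — i.e. that $\frac{k+1-i}{k+2-i}\cdot\frac{j}{k+1-i} = \frac{j}{k+2-i}$ for the inherited terms and that the new term ($j = k+1-i$) also fits the pattern with coefficient $\frac{k+1-i}{k+2-i}$. This is a short computation but it is where a sign error or an off-by-one in the index shift would break everything, so I would do it carefully. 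Everything else — the nonnegativity of the weights, the monotone substitution of $\underline\delta$ / $\bar\delta$, and the base case — is routine. No convexity or optimality beyond the already-stated first-order conditions of Proposition~\ref{prop:spacing} is needed; the lemma is purely an algebraic consequence of \eqref{eq-der-0} together with $x_{k+1} = 0$.
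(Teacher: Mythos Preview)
Your proposal is correct and matches the paper's argument: backward induction from $i=k$ using the recurrence rewritten as $x_i = \tfrac12(x_{i-1}+x_{i+1}) - \tfrac{\gamma}{2}(2x_i-3x_i^2)$ with $x_{k+1}=0$, substituting the inductive bound on $x_{i+1}$ and rearranging to extract the coefficient $\tfrac{k+1-i}{k+2-i}$ on $x_{i-1}$ (your Green's-function language is a nice conceptual overlay explaining why the weights $j/(k+2-i)$ are nonnegative, but the mechanics are identical). One minor slip: from \eqref{eq-der-0} the second difference is $x_{j+1}-2x_j+x_{j-1}=+\gamma(2x_j-3x_j^2)$, not $-\gamma(\cdots)$, though this is absorbed once you identify the forcing with $-(2x_j-3x_j^2)$ as you do later in the write-up.
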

\begin{proof}
Consider Equation (\ref{eq-der-0}):
\begin{align*}
x_{i+1} &= 2x_i - x_{i-1} + \gamma(2x_i - 3x_i^2) \implies x_i = \frac{x_{i-1} + x_{i+1}}{2} - \frac{\gamma(2x_i - 3x_i^2)}{2}.
\end{align*}
% By rearranging we get that:
% \begin{align*}
% x_i = \frac{x_{i-1} + x_{i+1}}{2} - \frac{\gamma(2x_i - 3x_i^2)}{2}.
% \end{align*}
Using our assumptions we have that:
\begin{align*}
\frac{x_{i-1} + x_{i+1}}{2} - \gamma \cdot \frac{\underline{\delta}_i}{2 } \leq x_{i} \leq \frac{x_{i-1} + x_{i+1}}{2} + \gamma \cdot \frac{\bar{\delta}_i}{2}. \tag{$*$}
\end{align*}

We now use this to prove the lemma by backward induction. As the proofs of the upper bound and lower bound are the same, here we will only include the proof of the upper bound. So the base case is $i=k$. In this case, from ($*$) we have that:
\begin{align*}
x_{k} \leq \frac{x_{k-1}}{2} + \gamma\frac{\bar \delta_k}{2}.
\end{align*}
We now assume that the induction hypothesis holds for $i+1$ and prove it holds for $i$. Substituting the induction hypothesis into ($*$), we get that:
\begin{align*}
x_{i} \leq \frac{x_{i-1} + \frac{k-i}{k+1-i}  \cdot x_{i} + \frac{\gamma}{k+1-i} \sum_{j=1}^{k-i} j \cdot  \bar{\delta}_{k+1-j}}{2} + \gamma\frac{\delta_i}{2}.
\end{align*}
After rearranging, we get that:
\begin{align*}
x_{i} \leq \frac{x_{i-1}}{2} + \frac{k-i}{2(k+1-i)}  \cdot x_{i} + \frac{\gamma}{2(k+1-i)} \sum_{j=1}^{k-i} j \cdot  \bar{\delta}_{k+1-j} + \gamma\frac{\delta_i}{2}.
\end{align*}
This implies that:
\begin{align*}
\frac{k+2-i}{2(k+1-i)} x_{i} \leq \frac{x_{i-1}}{2} + \frac{\gamma}{2(k+1-i)} \sum_{j=1}^{k-i} j \cdot  \bar{\delta}_{k+1-j} + \gamma\frac{\delta_i}{2}.
\end{align*}
Hence,
\begin{align*}
x_{i} &\leq \frac{k+1-i}{k+2-i} x_i  + \frac{\gamma}{k+2-i} \sum_{j=1}^{k-i} j \cdot  \bar{\delta}_{k+1-j} + \gamma \frac{k+1-i}{k+2-i}
\bar{\delta}_i \\
&\leq \frac{k+1-i}{k+2-i}  \cdot x_{i-1} + \frac{\gamma}{k+2-i} \sum_{j=1}^{k+1-i} j \cdot  \bar{\delta}_{k+1-j}.
\end{align*}
\end{proof}

\section{Numerical computations of the biased student's strategy} \label{sec:tables}

Here, we include numerical values for the computational observations of the biased student's strategy from Section~\ref{sec:comp_obs}.

\begin{table}[htp]
\centering
\caption{Stabilization of expanding optimal portfolio for $\gamma = 1, 0.5, 0.1$.}
\label{table:freezing_gaps}
\begin{tabular}{cl}
\toprule
\multicolumn{2}{c}{$\gamma = 1$}             \\
$k$ & Schools                                               \\
\midrule
1 & 0.333                                                   \\
2 & 0.391, 0.106                                            \\
3 & 0.398, 0.117, 0.030                                     \\
4 & 0.398, 0.118, 0.032, 0.008                              \\
5 & 0.399, 0.118, 0.032, 0.008, 0.002                       \\
6 & 0.399, 0.118, 0.032, 0.008, 0.002, 0.005                \\ 
\midrule
\multicolumn{2}{c}{$\gamma = 0.5$}             \\
$k$ & Schools                                               \\
\midrule
1 & 0.422                                                   \\
2 & 0.588, 0.207                                            \\
3 & 0.610, 0.272, 0.095                                     \\
4 & 0.624, 0.288, 0.116, 0.039                              \\
5 & 0.626, 0.291, 0.119, 0.046, 0.015                       \\
6 & 0.627, 0.291, 0.120, 0.047, 0.017, 0.005                \\
7 & 0.627, 0.291, 0.120, 0.047, 0.018, 0.006, 0.002         \\
8 & 0.627, 0.291, 0.120, 0.047, 0.018, 0.006, 0.002, 0.0008 \\
\midrule
\multicolumn{2}{c}{$\gamma = 0.1$}             \\
        $k$ & Schools \\
\midrule
       1  & 0.486  \\
       2  & 0.653, 0.310 \\
       3  & 0.741, 0.465, 0.218 \\
       4  & 0.795, 0.560, 0.343, 0.159 \\
       5  & 0.833, 0.626, 0.425, 0.256, 0.118 \\
       6  & 0.862, 0.673, 0.483, 0.320, 0.190, 0.087 \\
       7  & 0.883, 0.708, 0.525, 0.364, 0.236, 0.139, 0.063  \\
       8  & 0.898, 0.734, 0.555, 0.394, 0.266, 0.170, 0.099, 0.045 \\
       9  & \makecell[tl]{0.908, 0.751, 0.574, 0.414, 0.285, 0.188, 0.119, 0.069,  0.031 } \\
       10  & \makecell[tl]{0.915, 0.761, 0.587, 0.426, 0.296, 0.198, 0.129, 0.081,  0.046, 0.021 } \\
       11  & \makecell[tl]{0.918, 0.767, 0.593, 0.432, 0.301, 0.204, 0.134, 0.086,  0.053, 0.031, 0.014 } \\
       12  & \makecell[tl]{0.920, 0.770, 0.596, 0.435, 0.304, 0.206, 0.137, 0.089,  0.057, 0.035, 0.020, 0.009 } \\
       13  & \makecell[tl]{0.920, 0.771, 0.598, 0.436, 0.305, 0.207, 0.138, 0.090,  0.058, 0.037, 0.023, 0.013, 0.006 } \\
       14  & \makecell[tl]{0.921, 0.772, 0.598, 0.437, 0.306, 0.208, 0.138, 0.090,  0.059, 0.038, 0.024, 0.014, 0.008, 0.003 } \\
       % 15  & \makecell[tl]{0.921, 0.772, 0.598, 0.437, 0.306, 0.208, 0.138, 0.091,  0.059, 0.038, 0.024, 0.015, 0.009, 0.005, 0.002 } \\
  \bottomrule                                 
\end{tabular}
\end{table}

\begin{table}[htp]
\caption{Convergence of true expected payoff from $k=1$ to $100$, for selected values of $\gamma$.}
\label{table:payoff_converge}
\centering
\begin{tabular}{crrrrrr}
\toprule
$k$   & Unbiased ($\frac{k}{2(k+1)}$) & $\gamma = 0.01$ & $\gamma = 0.05$ & $\gamma = 0.1$ & $\gamma = 0.2$ & $\gamma=0.5$ \\
\midrule
1   & 0.25     & 0.249998      & 0.249958      & 0.249827     & 0.249242     & 0.244016   \\
2   & 0.333333 & 0.333329      & 0.333237      & 0.332936     & 0.331625     & 0.319467   \\
3   & 0.375    & 0.374991      & 0.374787      & 0.374121     & 0.371246     & 0.346748   \\
4   & 0.4      & 0.399984      & 0.399586      & 0.398285     & 0.392758     & 0.354428   \\
5   & 0.416666 & 0.416638      & 0.415945      & 0.413675     & 0.404421     & 0.356059   \\
6   & 0.428571 & 0.428527      & 0.427412      & 0.423796     & 0.410208     & 0.356349   \\
7   & 0.4375   & 0.437433      & 0.435752      & 0.430424     & 0.412700     & 0.356400   \\
8   & 0.444444 & 0.444349      & 0.441937      & 0.434621     & 0.413637     & 0.356407   \\
9   & 0.45     & 0.449868      & 0.446553      & 0.437132     & 0.413961     & 0.356407   \\
10  & 0.454545 & 0.454370      & 0.449979      & 0.438526     & 0.414065     & 0.356409   \\
25  & 0.480769 & 0.478386      & 0.457684      & 0.439903     & 0.414113     & 0.356409   \\
50  & 0.490196 & 0.481122      & 0.457685      & 0.439903     & 0.414113     & 0.356409   \\
75  & 0.493421 & 0.481131      & 0.457685      & 0.439903     & 0.414113     & 0.356409   \\
100 & 0.495049 & 0.481131      & 0.457685      & 0.439903     & 0.414113     & 0.356409  \\ 
\bottomrule
\end{tabular}
\end{table}

\end{document}